%
\documentclass[runningheads]{llncs}
\usepackage[numbers,sort&compress]{natbib}

\usepackage{amsmath, amsthm}
\usepackage{graphicx, tikz, subcaption}
\usepackage{float}
\usepackage{enumitem}

%

\begin{document}
%
\title{On Tree Equilibria in Max-Distance Network Creation Games}
%
%
\author{Qian Wang}
%
\authorrunning{Q. Wang}
%
\institute{CFCS, Peking University, Beijing, China\\
\email{charlie@pku.edu.cn}}
\maketitle              
\begin{abstract}
We study the Nash equilibrium and the price of anarchy in the max-distance network creation game. The network creation game, first introduced and studied by \citet{fabrikant2003network}, is a classic model for real-world networks from a game-theoretic point of view. In a network creation game with $n$ selfish vertex agents, each vertex can build undirected edges incident to a subset of the other vertices. The goal of every agent is to minimize its creation cost plus its usage cost, where the creation cost is the unit edge cost $\alpha$ times the number of edges it builds, and the usage cost is the sum of distances to all other agents in the resulting network. The max-distance network creation game, introduced and studied by \citet{demaine2007price}, is a key variant of the original game, where the usage cost takes into account the maximum distance instead. The main result of this paper shows that for $\alpha > 19$ all equilibrium graphs in the max-distance network creation game must be trees, while the best bound in previous work is $\alpha > 129$~\cite{mihalak2010price}. We also improve the constant upper bound on the price of anarchy to $3$ for tree equilibria. Our work brings new insights into the structure of Nash equilibria and takes one step forward in settling the tree conjecture in the max-distance network creation game.

\keywords{Network creation game \and Tree equilibrium \and Price of anarchy.}
\end{abstract}

\section{Introduction}

Many important networks in the real world, e.g. the Internet or social networks, have been created in a decentralized, uncoordinated, spontaneous way. Modeling and analyzing such networks is an important challenge in the fields of computer science and social science. \citet{fabrikant2003network} studied the process of building these networks from a game-theoretic perspective by modeling it as a network creation game, which formalizes the cost function for each agent in the network as its creation cost plus its usage cost. Each agent, as a vertex, can choose to build an edge to any other agent at a prefixed constant cost $\alpha$. The creation cost is $\alpha$ times the number of edges it builds. The usage cost is the sum of the distances from this agent to all the others in the original game, while later variants of this game define the usage cost differently. Each agent aims to minimize its total cost by choosing a suitable subset of other agents to connect, which makes it a complex game on the graph. This sum-distance network creation game and its variants are increasingly applied in various areas and have recently been used to study payment channel networks on the blockchain~\cite{avarikioti2019payment,avarikioti2020ride,papadis2020blockchain}.

In the max-distance network creation game introduced by \citet{demaine2007price}, the usage cost of an agent is defined as the maximum distance from this agent to other agents. This variant covers a great portion of applications in practice. On the one hand, we can naturally think of it as a model with the assumption of risk aversion. Each agent always considers the worst case in which it needs to communicate with the farthest agent, while each agent considers the average case in the sum-distance network creation game. On the other hand, as interactions between agents can often be parallel in decentralized networks, such as social networks, the max-distance game could be a more realistic model than the sum-distance game if considering the non-additive time cost. For example, a post shared by one person on Facebook can be seen by multiple friends at the same time. Suppose that a person posts a nice photo and wants a group of people to see it as soon as possible. Some of them may not be his friends, but everyone who sees it is willing to share on his or her timeline. Then the person will care more about the maximum social distance rather than the sum of social distances to them.

No matter which cost function is chosen, the selfish agents in networks only make local decisions based on their own interests, so the equilibrium graph often fails to achieve global optimality, i.e., the lowest social cost. Thus, characterizing the structure of the equilibrium graph and calculating the price of anarchy (PoA)~\cite{koutsoupias1999worst} are the key parts of the research, where the price of anarchy is defined to be the ratio between the social cost of a worst Nash equilibrium and the optimal social cost~\cite{fabrikant2003network}. 

It has been proved that if all equilibrium graphs are trees, the price of anarchy will be constant in either sum-distance setting or max-distance setting~\cite{fabrikant2003network, mihalak2010price, bilo2018tree}. This result will help to analytically justify the famous small-world phenomenon~\cite{watts1999networks,kleinberg2000small}, which suggests selfishly built networks are indeed very efficient. Therefore, researchers pay special attention to the distinguishment between the tree structure and the non-tree structure in the network creation game. Intuitively, when the unit cost $\alpha$ for building an edge is high enough, a tree is likely to be an equilibrium graph due to the fact that each agent can already reach other agents and will have no incentive to build additional edges. 

For the sum-distance network creation game, \citet{fabrikant2003network} conjectured that every equilibrium graph is a tree for $\alpha > A$ where $A$ is a constant. Although this tree conjecture was disproved by \citet{albers2006nash}, the reformulated tree conjecture is well accepted as that in the sum-distance network creation game, every equilibrium graph is a tree for $\alpha > n$, where $n$ is the number of agents. A series of works have contributed to this threshold from $12n\log n$~\cite{albers2006nash} to $273n$~\cite{mihalak2010price}, $65n$~\cite{mamageishvili2015tree}, $17n$~\cite{alvarez2017network}, and to $4n-13$~\cite{bilo2018tree}. Recently, a preprint by \citet{dippel2021improved} claimed an improved bound of $3n-3$. 

For the max-distance network creation game, there is no work formally stating an adapted tree conjecture. However, by evaluating the difference between the usage costs (sum-distance and max-distance), we can roughly conjecture that every equilibrium graph is a tree for $\alpha > A$ where $A$ is a constant. \citet{mihalak2010price} proved $A$ is at most $129$ and gave the example that for $\alpha \leq 1$, a non-tree equilibrium graph could be constructed. 

The main contribution of this paper is to prove $A$ is at most $19$, i.e. for $\alpha > 19$, every equilibrium graph is a tree in the max-distance network creation game (Theorem~\ref{main}). We bring new insights into the structure of Nash equilibria and take one step forward in settling the tree conjecture in the max-distance network creation game. Our proof for the first time combines two orthogonal techniques, the classic degree approach~\cite{mihalak2010price, mamageishvili2015tree, alvarez2017network} and the recent min-cycle approach~\cite{alvarez2017network,bilo2018tree}. Moreover, we prove the upper bound on the price of anarchy is $3$ for tree equilibria (Theorem~\ref{poa}), so that we not only enlarge the range of $\alpha$ for which the price of anarchy takes a constant value, but also further tighten this constant upper bound from $4$ \cite{mihalak2010price} to $3$. The best known bounds on the price of anarchy are available in Table~\ref{improvement}. Our results can be interpreted as that the efficiency of decentralized networks is in fact better than previously known and demonstrated. 

\begin{table}
	\centering
	\caption{Summary of the known bounds on the PoA.}\label{improvement}
	\resizebox{\columnwidth}{!}{
		\begin{tabular}{c|c|c|c|c|c|c|c|c|c|c|c|c|c}
		\multicolumn{2}{c}{\hspace{0.25cm}$\alpha = 0$} & \multicolumn{2}{c}{\hspace{0.35cm}$\frac{1}{n-2}$} & \multicolumn{2}{c}{\hspace{0.55cm}$O(n^{-1/2})$} & \multicolumn{2}{c}{\hspace{1.45cm}$19$} & \multicolumn{2}{c}{\hspace{2.2cm}$n$} & \multicolumn{2}{c}{\hspace{0.95cm}$+\infty$} \\
	   
		& \multicolumn{2}{c|}{}  &  \multicolumn{2}{c|}{}  & \multicolumn{2}{|c|}{}  & \multicolumn{2}{|c|}{} & \multicolumn{2}{|c|}{}\\
		\cline{2-11}

		\text{PoA} & \multicolumn{2}{|c|}{1 (\cite{mihalak2010price})}  &  \multicolumn{2}{|c|}{$\Theta(1)$ (\cite{mihalak2010price})} & \multicolumn{2}{|c|}{$2^{O(\sqrt{\log n})}$  (\cite{mihalak2010price})}  & \multicolumn{2}{|c|}{$< 3$ (Corollary~\ref{main_corollary})} & \multicolumn{2}{|c|}{$\leq 2$ (\cite{demaine2007price})}\\
		\cline{2-11}

		\end{tabular}
	}
\end{table}

\section{Related Work}

\citet{fabrikant2003network} first introduced and studied the (sum-distance) network creation game. They proved a general upper bound of $O(\sqrt{\alpha})$ on the PoA. Then they found that numerous experiments and attempted constructions mostly yielded Nash equilibria that are trees with only a few exceptions, and postulated the tree conjecture that there exists a constant $A$, such that for $\alpha > A$, all Nash equilibria are trees. Moreover, if all Nash equilibria are trees, the price of anarchy is upper bounded by a constant $5$.

However, the tree conjecture was soon disproved by \citet{albers2006nash} by leveraging some interesting results on finite affine planes. They on the other hand showed that for $\alpha \geq 12n\lceil \log n\rceil $, every Nash equilibrium forms a tree and the price of anarchy is less than $1.5$. They also gave a general upper bound of $O(1+(\min \{ \alpha^2/n, n^2/\alpha \})^{1/3})$ on the price of anarchy for any $\alpha$. \citet{demaine2007price} further proved in the range $O(n^{1-\epsilon})$, the price of anarchy is constant and also improved the constant values in some ranges of $\alpha$. For $\alpha$ between $n^{1-\epsilon}$ and $12n\log n$, they gave an upper bound of $2^{O(\sqrt{\log n})}$.

By the work \citet{demaine2007price}, it is believed that for $\alpha \geq n$, every Nash equilibrium is a tree, as the reformulated tree conjecture. \citet{mihalak2010price} used a technique based on the average degree of biconnected components and significantly improved this range from $\alpha \geq 12n\lceil \log n\rceil $ to $\alpha > 273n$, which is asymptotically tight. Based on the same idea, \citet{mamageishvili2015tree} improved this range to $\alpha > 65n$ and \citet{alvarez2017network} further improved it to $\alpha > 17n$. Recently, \citet{bilo2018tree} improved the bound to $\alpha > 4n-13$, using a technique based on critical pairs and min cycles, which is orthogonal to the known degree approaches. \citet{dippel2021improved} proved an improved bound of $3n-3$ through a more detailed case-by-case discussion. By the above results, we immediately have that the price of anarchy is constant for $\alpha = O(n^{1-\epsilon})$ and $\alpha > 3n-3$. Moreover, \citet{alvarez2019price} proved that the price of anarchy is constant for $\alpha > n(1+\epsilon)$ by bounding the size of any biconnected component of any non-tree Nash equilibrium, but his newer result has no implication on the tree conjecture.

The max-distance network creation game, as a key variant of the original game, was introduced by \citet{demaine2007price}. They showed that the price of anarchy is at most $2$ for $\alpha \geq n$, $O(\min \{4^{\sqrt{\log n}} (n/\alpha)^{1/3} \})$ for $2\sqrt{\log n} \leq \alpha \leq n$, and $O(n^{2/\alpha})$ for $\alpha < 2\sqrt{\log n}$. \citet{mamageishvili2015tree} showed that the price of anarchy is constant for $\alpha > 129$ and $\alpha = O(n^{-1/2})$ and also proved that it is $2^{O(\sqrt{\log n})}$ for any $\alpha$. Specifically, they showed that in the max-distance network creation game, for $\alpha > 129$ all equilibrium graphs are trees. In this work, we combine the classic degree approach~\cite{mihalak2010price, mamageishvili2015tree, alvarez2017network} and the recent min-cycle approach~\cite{alvarez2017network,bilo2018tree} to show that for $\alpha > 19$ all equilibrium graphs are trees and the price of anarchy is constant. Compared to \citet{mihalak2010price}, we also improve the constant upper bound on the price of anarchy from $4$ to $3$ for tree equilibria.

It has been proved that computing the optimal strategy of a single agent in the network creation game with either cost function is NP-hard~\cite{fabrikant2003network,mihalak2010price}, making it impractical to verify whether a non-tree equilibrium graph exists through experimentation.

There are other variants of the network creation game, such as weighted network creation game proposed by \citet{albers2006nash}, bilateral network creation game studied by \citet{demaine2007price,corbo2005price}. For more studies on these variants, as well as exact or approximation algorithms for computing equilibria in network creation games, we refer readers to~\cite{demaine2007price,albers2006nash,corbo2005price,halevi2007network,baumann2008price,alon2013basic,ehsani2015bounded,lenzner2011dynamics,bilo2015max,kawald2013dynamics,chauhan2017selfish,cord2012basic,chauhan2016selfish,andelman2009strong}.

\section{Game Model}

In a network creation game, there are $n$ agents denoted by $\{1, 2, \ldots, n\}$. The strategy of agent $i$ is specified by a subset $s_i$ of $\{1, 2, \ldots, n\}\backslash \{i\}$, which corresponds to the set of neighbor agents to which agent $i$ creates links. Together, a strategy profile of this network is denoted by $s=\left(s_1, s_2, \ldots, s_n\right)$. The cost of creating a link is a constant $\alpha>0$ and the goal of every agent is to minimize its total cost. We only consider the pure Nash equilibrium in this work.

To define the cost function, we consider a graph $G_s=\left<V, E_s\right>$, where $V=\{v_1, v_2, \ldots, v_n\}$ represents $n$ agents and $E_s$ is determined by the strategy profile $s$, i.e. the links they create. $G_s$ has an edge $(u, v)\in E_s$ if either $u \in s_v$ or $v \in s_u$. Note all edges are undirected and unweighted. The distance between $u,v\in V$ is denoted by $d_{G_s}(u, v)$, the length of a shortest $u$-$v$-path in $G_s$. If $u$ and $v$ are not in the same connected component of $G_s$, $d_{G_s}(u, v)=+\infty$. The cost incurred by agent $v$ is defined as 
$$c_{G_s}(v) = \alpha |s_v| + D_{G_s}(v),$$ where $D_{G_s}(v) = \max_{u\in V} d_{G_s}(u, v)$. And the social cost is $$\text{Cost}(G_s) = \sum_{v\in V} c_{G_s}(v) = \alpha |E_s| + \sum_{v\in V}D_{G_s}(v).$$

A strategy profile $s$ is a Nash equilibrium, if for every agent $v$ and for every $s'$ which is different with $s$ only on $v$'s strategy, $c_{G_s}(v) \leq c_{G_{s'}}(v)$. In other words, no one has the incentive to deviate from its strategy as long as all other agents stick to their strategies. We call a graph $G_s$ an equilibrium graph if $s$ is a Nash equilibrium. Note that any equilibrium graph must be a connected graph, otherwise every agent will have an infinite cost. We also note that no edge is paid by both sides at equilibrium, otherwise either side can cancel its payment to get a lower cost.

We call a graph $G_{s^*}$ an optimal graph if $s^*$ minimizes the social cost such that $\text{Cost}(G_{s^*}) = \min_{s\in \mathcal{S} } \text{Cost}(G_s)$, where $S$ is the whole strategy space. The price of anarchy in this game is defined to be the ratio between the cost of a worst Nash equilibrium and the optimal social cost~\cite{fabrikant2003network}, i.e., $$\text{PoA}=\frac{\max_{s\in \mathcal{N} } \text{Cost}(G_s)}{\text{Cost}(G_{s^*})}, $$ where $\mathcal{N}$ is the set of all Nash equilibria.

In the following context, we will mostly omit the reference to the strategy profile $s$. If we need to represent two or more graphs induced by different strategy profiles, we will use $G, G'$, or $G_1, G_2, \ldots$ instead.

Next, we introduce some commonly used notations in graph theory. We use $V(G)$ and $E(G)$ to denote the vertex set and edge set of a graph $G$. We use $|\cdot|$ to denote the size of a vertex or edge set, e.g. $|V(G)|$. We also denote the length of a cycle $C$ by $|C|$. We denote the diameter of $G$ by $diam(G)=\max_{v\in V} D_G(v)$ and denote the radius of $G$ by $rad(G)=\min_{v\in V} D_G(v)$. A vertex is called a central vertex if $D_G(v) = rad(G)$. A cut vertex $v$ of $G$ is a vertex whose removal from $G$ increases the number of connected components. A biconnected graph is a graph with no cut vertex. A biconnected component $H$ of $G$ is a maximal biconnected subgraph with more than two vertices. Different from the standard definition of the biconnected component, we rule out the trivial subgraph with only two vertices and one edge for convenience. In this way, if there is no biconnected component in $G$, $G$ must be a tree.  The average degree of a biconnected component $H$ is $deg(H) = \frac{1}{|V(H)|}\sum_{v\in V(H)} deg_H(v)$, where $deg_H(v)$ counts the number of edges connecting $v$ in $H$. 

\section{Equilibrium Graphs}
\label{method}

This section demonstrates the main results of this paper. We show that in the max-distance network creation game, for $\alpha > 19$ all equilibrium graphs are trees and the price of anarchy is less than $3$. The main idea is to find the contradiction about the average degree of biconnected components. For any biconnected component $H$, we have $deg(H) \geq 2 + \frac{1}{5}$ when $\alpha > 5$ (Subsection~\ref{lower_section}), but meanwhile, we also have $deg(H) < 2 + 2/\lceil (\alpha - 1)/2 \rceil$ when $\alpha > 2$ (Subsection~\ref{upper_section}). Hence if $\alpha > 19$, a contradiction implies there is no biconnected component in $G$, thus a tree (Subsection~\ref{poa_section}). Finally, we give a upper bound of $3$ on the price of anarchy if all Nash equilibria are trees.

\subsection{Cycles in the Equilibrium Graph}
\label{cycles}

In this subsection, we give a basic characterization of cycles in equilibrium graphs. We show that in every equilibrium graph, all min cycles, if any, are directed (Lemma~\ref{directed}) and all cycles are at least of length $2\alpha - 1$ (Lemma~\ref{length2}). To prove these, we first prepare Lemma~\ref{crucial}, which is the most important lemma of this work, running through the whole paper.

\begin{lemma}
	\label{crucial}
	Let $G$ be an equilibrium graph and $a, b$ are two distinct vertices of $G$ such that $a$ buys an edge $(a, a_1)$. If there exists a shortest path tree $T$ of $G$ rooted at $b$ such that either $(a, a_1)$ is not an edge of $T$ or $a_1$ is the parent of $a$ in $T$, then $D_G(a) \leq D_G(b) + 1$. Furthermore, if $a$ is buying also another edge $(a, a_2), a_2\neq a_1$, and $(a, a_2)$ is not an edge of $T$, then $D_G(a) \leq D_G(b) + 1 - \alpha$.
\end{lemma}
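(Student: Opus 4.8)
The plan is to derive both inequalities from the Nash condition applied to a single, explicitly constructed deviation of agent $a$: agent $a$ drops the edge $(a,a_1)$ (and, for the ``furthermore'' part, also $(a,a_2)$) and instead buys the edge $(a,b)$ directly to the root $b$ of the given shortest path tree. The point of connecting to $b$ is that, in the resulting graph $G'$, agent $a$ can reach any vertex $v$ by first stepping to $b$ and then following a shortest $b$--$v$ path, which gives $D_{G'}(a) \le D_G(b) + 1$ \emph{provided} that a full shortest path tree rooted at $b$ survives the edge deletion. The equilibrium condition $c_G(a) \le c_{G'}(a)$ then yields the claim once we track how the number of edges bought by $a$ changes: swapping one bought edge for another keeps it the same (here we use that no edge is paid by both endpoints at equilibrium, so deleting $(a,a_1)$ from $a$'s strategy genuinely removes the edge), while dropping two bought edges in exchange for one decreases it by $1$, which is exactly where the extra $-\alpha$ comes from.

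The technical core is showing that a shortest path tree rooted at $b$ persists after the deletion. I would split into the two cases allowed by the hypothesis. If $(a,a_1)\notin E(T)$, then $T\subseteq E(G')$ directly (the deletions touch only $(a,a_1)$ and $(a,a_2)$, and $(a,a_2)\notin E(T)$ in the second part), so $T$ together with the new edge $(a,b)$ gives $d_{G'}(a,v)\le 1 + d_T(b,v)\le 1 + D_G(b)$ for every $v$. If instead $a_1$ is the parent of $a$ in $T$, deleting $(a,a_1)$ splits $T$ into the subtree $T_a$ rooted at $a$ and the remaining part containing $b$; adding $(a,b)$ reconnects them, so $T' := (T\setminus\{(a,a_1)\})\cup\{(a,b)\}$ is a spanning tree of $G'$. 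For $v\in T_a$ the $a$--$v$ path is unchanged and $d_{T'}(a,v)=d_T(a,v)\le d_T(b,v)\le D_G(b)$; for $v\notin T_a$ the $b$--$v$ path in $T$ avoids $T_a$ and hence survives, so $d_{T'}(a,v)\le 1 + d_T(b,v)\le 1 + D_G(b)$. Either way $D_{G'}(a)\le D_G(b)+1$.

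The last thing to dispatch is the handful of degenerate configurations: $b$ could already be a neighbor that $a$ buys ($b\in s_a$), or $b$ could coincide with $a_1$ or with $a_2$. In each such case the deviation is either unnecessary (if $a$ already reaches $b$ in one hop, just delete $(a,a_1)$, or $(a,a_2)$, reducing $a$'s edge count) or trivial ($d_G(a,b)=1$ already forces $D_G(a)\le D_G(b)+1$), and one checks it only strengthens the stated bounds. I expect the main obstacle to be the bookkeeping in the ``$a_1$ is the parent of $a$'' case: one must verify carefully that rerouting the whole subtree $T_a$ through the single new edge to $b$ does not inflate $a$'s eccentricity, which relies precisely on the fact that every vertex of $T_a$ is within distance $D_G(b)$ of $a$ along $T$ because its shortest $b$-to-$v$ path runs through $a$.
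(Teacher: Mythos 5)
Your proof is correct and takes essentially the same route as the paper's: the identical deviation in which $a$ swaps $(a,a_1)$ for $(a,b)$ (additionally dropping $(a,a_2)$ for the second claim), combined with the observation that no vertex gets farther from $b$, so $D_{G'}(a)\le D_{G'}(b)+1\le D_G(b)+1$, and the equilibrium condition then gives both bounds via the creation-cost bookkeeping. You merely spell out in more detail than the paper the case analysis showing a shortest path tree rooted at $b$ survives the deletion, and the degenerate configurations ($b=a_1$, $b=a_2$, $b\in s_a$), all of which check out.
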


\begin{proof}
	Consider the strategy change in which vertex $a$ swaps the edge $(a, a_1)$ with the edge $(a, b)$ and removes any other edge it owns but not contained in $T$, if any. Let $G'$ be the graph obtained after the swap. Observe that no vertex gets further to $b$, $d_{G'}(b, x) \leq d_{G}(b, x)$ for every $x\in V$, so $D_{G'}(a) \leq D_{G'}(b) + 1 \leq D_G(b) + 1$. Since $a$ does not apply this strategy change, we have $D_G(a) \leq D_G(b) + 1$. Furthermore, if $a$ also saves its creation cost $\alpha$ by removing $(a, a_2)$, we have $D_G(a) \leq D_{G'}(a) - \alpha \leq D_G(b) + 1 - \alpha$. See Figure~\ref{crucial_diagram}.
\end{proof}

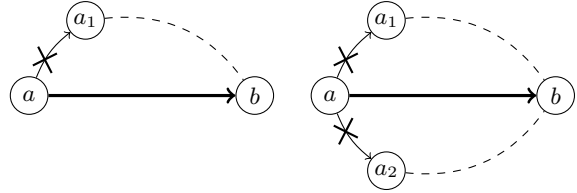
\begin{figure}[htpb]
    \centering
    \renewcommand*\thesubfigure{\arabic{subfigure}}
    \begin{tikzpicture}
        \begin{scope}
            \node[circle, draw, inner sep=0cm, minimum size=.5cm] (a) at (0, 0) {$a$};
            \node[circle, draw, inner sep=0cm, minimum size=.5cm] (b) at (3, 0) {$b$};
            \node[circle, draw, inner sep=0cm, minimum size=.5cm] (a1) at (0.75, 1) {$a_1$};
            \node[circle, draw, inner sep=0cm, minimum size=.5cm, opacity=0] (a2) at (0.75, -1) {$a_2$};
            \draw[->] (a) to [bend left=15] (a1);
            \draw[thick] (0.3, 0.3)--+(-0.18, 0.35);
            \draw[thick] (0.05, 0.4)--+(0.32, 0.16);
            \draw[->, very thick] (a) to  (b);
            \draw[dashed] (a1) to [bend left=30] (b);
        \end{scope}
        \begin{scope}[xshift=4cm]
            \node[circle, draw, inner sep=0cm, minimum size=.5cm] (a) at (0, 0) {$a$};
            \node[circle, draw, inner sep=0cm, minimum size=.5cm] (b) at (3, 0) {$b$};
            \node[circle, draw, inner sep=0cm, minimum size=.5cm] (a1) at (0.75, 1) {$a_1$};
            \node[circle, draw, inner sep=0cm, minimum size=.5cm] (a2) at (0.75, -1) {$a_2$};
            \draw[->] (a) to [bend left=15] (a1);
            \draw[->] (a) to [bend right=15] (a2);
            \draw[thick] (0.3, 0.3)--+(-0.18, 0.35);
            \draw[thick] (0.05, 0.4)--+(0.32, 0.16);
            \draw[thick] (0.3, -0.3)--+(-0.18, -0.35);
            \draw[thick] (0.05, -0.4)--+(0.32, -0.16);
            \draw[->, very thick] (a) to  (b);
            \draw[dashed] (a1) to [bend left=30] (b);
            \draw[dashed] (a2) to [bend right=30] (b);
        \end{scope}
    \end{tikzpicture}
    \caption{On the shortest path tree $T$ rooted at $b$, $a_1$ is the parent of $a$ or $(a, a_1)$ is a non-tree edge. The same goes for $a_2$. If $a$ swaps $(a, a_1)$ with $(a, b)$, the cost increases from $D_G(a)$ to at most $D_G(b)+1$. If $a$ further removes $(a, a_2)$, the cost increases from $D_G(a)$ to at most $D_G(b)+1 - \alpha$.}
    \label{crucial_diagram}
\end{figure}

\begin{lemma}[\cite{demaine2007price}]
	\label{length1}
	Every equilibrium graph has no cycle of length less than $\alpha + 2$.
\end{lemma}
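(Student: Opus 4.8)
The plan is a single strategy-deviation argument: assume toward a contradiction that $G$ contains a cycle $C=v_0v_1\cdots v_{\ell-1}v_0$ with $\ell<\alpha+2$, and exhibit a profitable edge deletion for the owner of one of the cycle edges. This is in the same spirit as the swap argument of Lemma~\ref{crucial}, but simpler, since we only delete an edge rather than swap one.

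First I would choose the edge to delete. Since at equilibrium no edge is paid by both of its endpoints, each edge of $C$ is owned by exactly one of the two vertices it joins; relabel the cycle so that $v_0$ owns $(v_0,v_1)$ and the cycle is traversed in the order $v_0,v_1,\ldots,v_{\ell-1}$. Consider the deviation in which $v_0$ simply drops $v_1$ from $s_{v_0}$, producing $G'=G-(v_0,v_1)$. Because the two endpoints of the deleted edge still lie on the path $v_1v_2\cdots v_{\ell-1}v_0$, the graph $G'$ is still connected, so all distances in $G'$ are finite.

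The technical core is to bound $D_{G'}(v_0)$. Fix any vertex $x$ and take a shortest $v_0$-$x$ path $P$ in $G$. Being a shortest path it is simple, and since it starts at $v_0$ it can use the edge $(v_0,v_1)$, if at all, only as its first edge. If $P$ avoids $(v_0,v_1)$, then $P$ survives in $G'$, so $d_{G'}(v_0,x)\le d_G(v_0,x)$. If $P$ begins with $v_0\to v_1$, replace that first step by the detour $v_0v_{\ell-1}v_{\ell-2}\cdots v_1$ of length $\ell-1$, obtaining a $v_0$-$x$ walk in $G'$ of length $d_G(v_0,x)-1+(\ell-1)=d_G(v_0,x)+\ell-2$. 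Hence $d_{G'}(v_0,x)\le d_G(v_0,x)+\ell-2$ for every $x$, and therefore $D_{G'}(v_0)\le D_G(v_0)+\ell-2$.

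Finally I would compare costs. We have $c_{G'}(v_0)=\alpha(|s_{v_0}|-1)+D_{G'}(v_0)\le c_G(v_0)-\alpha+(\ell-2)$. Since $\ell<\alpha+2$ gives $\ell-2<\alpha$, this is strictly less than $c_G(v_0)$, contradicting the Nash condition. The only step requiring care — the ``main obstacle,'' such as it is — is the rerouting bound on $D_{G'}(v_0)$: one must observe that a shortest path from $v_0$ meets the deleted edge at most once and only at its start, so that a single detour of length $\ell-1$ around the cycle suffices to repair every shortest path; the rest follows directly from the definitions and from the fact that cycle edges have unique owners at equilibrium.
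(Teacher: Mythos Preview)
Your argument is correct and is essentially the same as the paper's: pick an edge of the short cycle, let its owner delete it, and observe that the max-distance can increase by at most $|C|-2<\alpha$ while the creation cost drops by $\alpha$. You spell out the rerouting bound on $D_{G'}(v_0)$ more carefully than the paper does, but the idea and the deviation are identical.
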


\begin{proof}
	Suppose for contradiction there is a cycle $C$ in the equilibrium graph with length less than $\alpha + 2$ . Let $(v, u)$ be an edge of $C$. By symmetry, we can assume $v$ buys it. If $v$ removes this edge it decreases its creation cost by $\alpha$ and increases its usage cost by at most $|C|-2 < \alpha$. So $v$ has the incentive to remove this edge and this cannot be an equilibrium graph, a contradiction.
\end{proof}

\begin{definition}[Min Cycle]
    Let $G$ be a non-tree graph and let $C$ be a cycle of $G$. Then $C$ is a min cycle if for every two vertices $x_1, x_2 \in V(C)$, $d_C(x_1, x_2) = d_G(x_1, x_2)$.
\end{definition}

\begin{definition}[Directed Cycle]
	Let $G$ be a non-tree graph and let $C$ be a cycle of length $k$ in $G$. Then $C$ is a directed cycle if there is an ordering $u_0, \ldots, u_{k-1}$ of its $k$ vertices such that, for every $i=0, \ldots, k-1$, $(u_i, u_{(i+1)\mod k})$ is an edge of $C$ which is bought by vertex $u_i$.
\end{definition}

\begin{definition}[Antipodal Vertex]
	Let $C$ be a cycle of length $k$ and let $u, v$ be two vertices of $C$. Then $u$ is an antipodal vertex of $v$ if $d_C(u, v) \geq \lfloor \frac{k}{2} \rfloor$.
\end{definition}

\begin{lemma}
	\label{directed}
	For $\alpha > 2$, if $G$ is a non-tree equilibrium graph, then every min cycle in an equilibrium graph is directed.
\end{lemma}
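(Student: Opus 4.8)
The plan is to argue by contradiction. Suppose some min cycle $C$, say of length $k$, is not directed; by Lemma~\ref{length1} we have $k\ge\alpha+2>4$. Each edge of $C$ is paid by exactly one of its two endpoints, and both endpoints lie on $C$, so the $k$ payments are distributed among the $k$ vertices of $C$. Since $C$ is not directed, not every vertex of $C$ pays for exactly one of its two incident $C$-edges, hence some vertex $v\in V(C)$ pays for \emph{both} of its $C$-edges $(v,v_1)$ and $(v,v_2)$. I will show that this forces $\alpha\le 2$, a contradiction. Throughout, distances are taken in $G$, but on $V(C)$ they agree with cycle distances because $C$ is a min cycle, which is exactly what justifies the shortest-path-tree surgeries below.

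First I would make $v$ ``central''. Let $b\in V(C)$ be the vertex at distance $\lfloor k/2\rfloor$ from $v$ along the arc of $C$ through $v_1$. That arc is a shortest $v$-$b$ path, so there is a shortest-path tree $T$ rooted at $b$ containing it, and then $v_1$ is the parent of $v$ in $T$; moreover $d_G(b,v_2)=\lceil k/2\rceil-1<\lfloor k/2\rfloor+1=d_G(b,v)+1$, so no shortest $b$-$v_2$ path runs through $v$ and hence $(v,v_2)\notin T$. The ``furthermore'' clause of Lemma~\ref{crucial}, applied with $a=v$, $a_1=v_1$, $a_2=v_2$, then yields $D_G(v)\le D_G(b)+1-\alpha$; in particular $D_G(b)\ge D_G(v)+\alpha-1$, so $b$ is ``far''.

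Next I would show $b$ is a sink of the edge orientation of $C$. The claim is that $b$ pays for no edge $(b,x)$ with $d_G(v,x)\le d_G(v,b)$: for such an edge one can choose a shortest-path tree $T'$ rooted at $v$ in which either $(b,x)\notin T'$ (when $d_G(v,x)=d_G(v,b)$, since $(b,x)$ then joins two vertices equidistant from $v$) or $x$ is the parent of $b$ (when $d_G(v,x)=d_G(v,b)-1$), so Lemma~\ref{crucial} gives $D_G(b)\le D_G(v)+1$, contradicting the previous step because $\alpha>2$. Since each $C$-neighbour of $b$ is at distance $d_G(v,b)-1$ or $d_G(v,b)$ from $v$ (min-cycle property), $b$ pays for neither of its $C$-edges; equivalently, both $C$-neighbours of $b$ pay for their edges to $b$.

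Finally I would turn this into a contradiction. When $k$ is odd this is immediate: $v$ then has two antipodal vertices which are adjacent on $C$, and the two steps above apply to each, forcing both to be sinks --- impossible, since the $C$-edge between them would be paid by neither endpoint. When $k$ is even, $b$ is the unique antipodal vertex and its two $C$-neighbours are equidistant from $v$, hence interchangeable as the parent of $b$ in a shortest-path tree rooted at $v$; I would re-run the two steps for these neighbours (and, inductively, for their $C$-neighbours), which pins down the orientation of $C$ on a whole window around $b$ and again contradicts $\alpha>2$. I expect this even case to be the main obstacle: the inequalities produced by Lemma~\ref{crucial} have to be combined with no slack at all --- losing a single additive unit would only rule out $\alpha>3$ --- so the careful part is the bookkeeping of which edges are forced into the relevant shortest-path trees and how far the ``sink'' and ``central'' constraints propagate along $C$. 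Once the min-cycle property is used to locate the right shortest-path trees, everything else is a direct application of Lemma~\ref{crucial}.
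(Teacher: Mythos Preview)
Your odd-length case is correct and matches the paper's argument exactly: two applications of Lemma~\ref{crucial}, one with $a=v$ (using the ``furthermore'' clause) and one with $a$ equal to an antipodal vertex that buys a cycle edge, sum to $\alpha\le 2$.

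The even-length case, however, is where your proposal has a genuine gap. Your plan to ``re-run the two steps'' for the neighbours $u_1,u_1'$ of $b$ and then propagate inductively along $C$ does not work as stated: your ``sink'' step for $u_1$ would only forbid $u_1$ from buying edges to vertices at distance $\le d_G(v,u_1)=k/2-1$ from $v$, and the edge $(u_1,b)$ that $u_1$ \emph{does} buy goes to a vertex at distance $k/2$. So $u_1$ is not a sink, the induction stalls after one step, and nothing further is pinned down. The good news is that no induction is needed at all. You already observed that $u_1$ and $u_1'$ are interchangeable as the parent of $b$ in a shortest-path tree rooted at $v$; choosing $u_1'$ as the parent makes $(u_1,b)$ a non-tree edge, so Lemma~\ref{crucial} with $a=u_1$, $a_1=b$, and root $v$ gives $D_G(u_1)\le D_G(v)+1$ immediately. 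Your ``first step'' applied verbatim with $u_1$ in place of $b$ (the same shortest-path-tree check goes through, since the two $u_1$-$v_2$ arcs on $C$ both have length $k/2$) gives $D_G(v)\le D_G(u_1)+1-\alpha$. Adding these yields $\alpha\le 2$, and you are done.

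This is exactly how the paper handles the even case: it never passes through the ``sink'' observation at all, but simply takes a neighbour $u_1$ of the antipodal vertex $u$, notes that either $u$ buys $(u,u_1)$ (then argue with $u$) or $u_1$ buys $(u_1,u)$ (then argue with $u_1$ as above), and in either branch applies Lemma~\ref{crucial} twice. Your detour through the sink property is a valid reorganisation for odd $k$, but for even $k$ it obscured the one-line finish that your own ``interchangeable parent'' remark already provides.
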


\begin{proof}
	Suppose for contradiction, there is a min cycle $C$ in $G$ that is not directed. That means $C$ contains a vertex $v$ which buys both its incident edges in $C$, say $(v, v_1)$ and $(v, v_2)$.

	If $C$ is an odd-length cycle, then $v$ has two distinct antipodal vertices $u, u_1 \in C$. By symmetry we can assume $u$ buys the edge $(u, u_1)$. By Lemma~\ref{length1}, we have $|C| \geq \alpha + 2 > 4 \Longrightarrow |C| \geq 5$. Hence $v, v_1, v_2, u, u_1$ are different vertices. By Lemma~\ref{crucial} (where $a=u, a_1=u_1, b=v$), we have $D_G(u) \leq D_G(v) + 1$. Also by Lemma~\ref{crucial} (where $a=v, a_1=v_1, a_2=v_2, b=u$), we have $D_G(v) \leq D_G(u) + 1 - \alpha$. By summing up both the left-hand and the right-hand side of the two inequalities, we have $\alpha \leq 2$ which is a contradiction.
	
	If $C$ is an even-length cycle, then $v$ has one antipodal vertex $u\in C$. Denote the vertex adjacent to $u$ in $C$ (from any side) by $u_1$. If $u$ buys the edge $(u, u_1)$, the same analysis as above will lead to $\alpha \leq 2$, a contradiction. If $u_1$ buys the edge $(u_1, u)$, we can check $v$ and $u_1$ also satisfies the condition in Lemma~\ref{crucial} (where $a=v, b=u_1$). Therefore we have $D_G(u_1) \leq D_G(v) + 1$ and $D_G(v) \leq D_G(u_1) + 1 - \alpha$, which also leads to $\alpha \leq 2$, a contradiction.
\end{proof}

\begin{lemma}
	\label{contain}
	If $G$ is a non-tree equilibrium graph and $H$ is a biconnected component of $G$, then for every edge $e$ of $H$, there is a min cycle $C$ of $H$ that contains the edge $e$.
\end{lemma}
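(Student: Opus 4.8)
I would prove Lemma~\ref{contain} by showing that a \emph{shortest} cycle through $e$ is automatically a min cycle. Fix an edge $e$ of $H$. Since $H$ is biconnected, it has no bridge, so $e$ lies on at least one cycle of $H$; let $C$ be one of minimum length among all cycles of $H$ that contain $e$. It suffices to show $C$ is a min cycle of $H$, i.e. that $d_C(x_1,x_2)=d_H(x_1,x_2)$ for all $x_1,x_2\in V(C)$ (note that since $C\subseteq H$ we always have $d_H\le d_C$, so only the reverse can fail).

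Suppose for contradiction that $C$ is not a min cycle, and call a pair of distinct vertices $\{x_1,x_2\}\subseteq V(C)$ \emph{violating} if $d_H(x_1,x_2)<d_C(x_1,x_2)$; by assumption one exists. The crucial choice is to pick a violating pair $\{x_1,x_2\}$ \emph{minimizing} $d_H(x_1,x_2)$, and let $P$ be a shortest $x_1$--$x_2$ path in $H$. I would first show $P$ is internally vertex-disjoint from $C$: if an internal vertex $y$ of $P$ were on $C$, then (sub-paths of shortest paths are shortest) $d_H(x_1,y)+d_H(y,x_2)=d_H(x_1,x_2)$ with both summands strictly smaller than $d_H(x_1,x_2)$; minimality of the chosen pair then forces neither $\{x_1,y\}$ nor $\{y,x_2\}$ to be violating, i.e. $d_C(x_1,y)=d_H(x_1,y)$ and $d_C(y,x_2)=d_H(y,x_2)$, whence $d_C(x_1,y)+d_C(y,x_2)=d_H(x_1,x_2)<d_C(x_1,x_2)$, contradicting the triangle inequality for the shortest-path metric $d_C$ on the cycle $C$.

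Now split $C$ at $x_1,x_2$ into its two arcs $C_1,C_2$, labelled so that $e$ lies on $C_1$; then $|C_1|+|C_2|=|C|$ and $|P|=d_H(x_1,x_2)<d_C(x_1,x_2)=\min(|C_1|,|C_2|)\le |C_2|$. Since every vertex of $C_1$ lies on $C$ while the internal vertices of $P$ do not, $C_1$ and $P$ meet only at $x_1$ and $x_2$, so $C_1\cup P$ is a (simple) cycle of $H$; it contains $e\in C_1$, and it has length $|C_1|+|P|<|C_1|+|C_2|=|C|$. This contradicts the minimality of $C$, so $C$ must be a min cycle, proving the lemma. (One may also observe that, as $H$ is a biconnected component of $G$, any shortest path in $G$ between two vertices of $H$ can be rerouted inside $H$, so $d_H$ and $d_G$ agree on $V(H)$ and a min cycle of $H$ is a min cycle of $G$ in the earlier sense.)

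I expect the only genuinely delicate point to be exactly this choice of the violating pair: with an arbitrary shortcut $P$ the surgery would only yield a closed walk through $e$, and extracting an honest shorter cycle through $e$ would require a separate argument (e.g. decomposing the walk into cycles and checking $e$ is used an odd number of times). Choosing $\{x_1,x_2\}$ to minimize $d_H(x_1,x_2)$ makes $P$ internally disjoint from $C$, which sidesteps all of this; the remaining steps are routine manipulations of the cycle metric $d_C$.
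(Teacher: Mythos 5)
Your proof is correct and follows essentially the same route as the paper: take a cycle of minimum length among those containing $e$, and show that any violation of the min-cycle property yields, by rerouting one arc through the shortcut, a strictly shorter cycle through $e$. You are in fact more careful than the paper at the one delicate point you identify — the paper simply asserts that a violating pair whose shortest $H$-path is disjoint from $C$ exists, whereas your choice of a violating pair minimizing $d_H(x_1,x_2)$ actually establishes the internal disjointness needed to make $C_1\cup P$ a simple cycle.
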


\begin{proof}
	Since $H$ is biconnected, there exists at least a cycle containing the edge $e$. Among all cycles in $H$ that contain the edge $e$, let $C$ be a cycle of minimum length. We claim that $C$ is a min cycle. For the sake of contradiction, assume that $C$ is not a min cycle. This implies that there are two vertices $u, v \in V_C$ such that $d_H(u, v) < d_C(u, v)$ and the shortest path $P'$ between $u$ and $v$ in $H$ is disjoint with $C$. Let $P_1$ and $P_2$ be two disjoint paths between $u$ and $v$ in $C$, with length $l_1$ and $l_2$, $l_1 + l_2 = |C|$. By symmetry we can assume $P_1$ contains $e$. Then $P$ and $P_1$ form a new cycle $C'$ with length $l' + l_1 = d_H(u, v) + l_1 < d_C(u, v) + l_1 \leq l_1 + l_2 = |C|$. Therefore we can find a cycle containing $e$ but strictly shorter than $C$, a contradiction.
\end{proof}

\begin{corollary}
	\label{buyone}
	For $\alpha > 2$, if $G$ is a non-tree equilibrium graph and $H$ is a biconnected component of $G$, then for every vertex $v \in V(H)$, $v$ buys at least one edge in $H$.
\end{corollary}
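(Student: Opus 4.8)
The plan is to derive this corollary almost immediately by combining Lemma~\ref{contain} and Lemma~\ref{directed}. Fix a vertex $v \in V(H)$. Since $H$ is biconnected (in particular connected and with more than one vertex), at least one edge $e$ of $H$ is incident to $v$. First I would apply Lemma~\ref{contain} to $e$ to obtain a min cycle $C$ of $H$ containing $e$; because $e$ is incident to $v$, the cycle $C$ passes through $v$.

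The second step is to justify that Lemma~\ref{directed} applies to $C$, i.e.\ that $C$ is a min cycle of the whole equilibrium graph $G$ and not merely of $H$. For this it suffices to note that $d_G(x_1, x_2) = d_H(x_1, x_2)$ for any $x_1, x_2 \in V(H)$: a shortest path between $x_1$ and $x_2$ in $G$ cannot use a vertex outside $H$, since an excursion of that path out of and back into $H$, closed up by a path inside $H$ (which exists as $H$ is biconnected), would form a cycle through an edge of $H$ and through a vertex not in $H$, contradicting the fact that all cycles through a fixed edge lie within that edge's biconnected component. Hence $d_C(x_1, x_2) = d_H(x_1, x_2) = d_G(x_1, x_2)$ for all $x_1, x_2 \in V(C)$, so $C$ is a min cycle of $G$; since $\alpha > 2$ and $G$ is a non-tree equilibrium graph, Lemma~\ref{directed} gives that $C$ is directed.

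Finally I would unfold the definition of a directed cycle: there is a cyclic ordering $u_0, \ldots, u_{k-1}$ of $V(C)$ in which each $u_i$ buys the edge $(u_i, u_{(i+1)\bmod k})$. Since $v = u_j$ for some $j$, the vertex $v$ buys $(u_j, u_{(j+1)\bmod k})$, which is an edge of $C$ and hence an edge of $H$; so $v$ buys at least one edge in $H$, as claimed. I do not anticipate a genuine obstacle here: the only point worth making explicit is the identification of a min cycle of $H$ with a min cycle of $G$ that is needed to invoke Lemma~\ref{directed}. If one wished to avoid even that, one could instead re-run the edge-swap argument behind Lemmas~\ref{crucial} and~\ref{directed} directly on a shortest cycle of $H$ through $e$, which by the same excursion argument is also a shortest cycle in $G$.
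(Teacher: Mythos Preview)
Your proof is correct and follows exactly the paper's approach: pick an edge of $H$ incident to $v$, apply Lemma~\ref{contain} to obtain a min cycle through $v$, and then invoke Lemma~\ref{directed} to conclude that $v$ buys one of its cycle edges. The only difference is that you explicitly justify why a min cycle of $H$ is also a min cycle of $G$ (via the cut-vertex argument that any $G$-path between two vertices of $H$ leaving $H$ would have to re-enter through the same cut vertex and hence could not be shortest); the paper's proof tacitly uses this identification without comment, so your version is in fact slightly more complete.
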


\begin{proof}
	Consider any edge of vertex $v$, say $(v, u)$. By Lemma~\ref{contain}, $(v, u)$ is contained in a min cycle $C$. Then by Lemma~\ref{directed}, we know $v$ buys at least one edge in $C$ if $\alpha > 2$.
\end{proof}

\begin{lemma}
	\label{length2}
	Every equilibrium graph has no cycle of length less than $2\alpha - 1$.
\end{lemma}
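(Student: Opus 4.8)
I want to show every equilibrium graph has no cycle of length less than $2\alpha - 1$. Lemma~\ref{length1} already gives us no cycle shorter than $\alpha + 2$, so the work is to roughly double that. The natural first reduction is to observe that a shortest cycle through any given edge is a min cycle (by Lemma~\ref{contain}, applied with $G$ itself playing the role of $H$ after passing to a biconnected component, or directly: any shortest cycle in $G$ is a min cycle), so it suffices to bound the length of min cycles from below. Suppose for contradiction $C$ is a min cycle with $|C| = k < 2\alpha - 1$; by Lemma~\ref{length1} we also have $k \geq \alpha + 2$. By Lemma~\ref{directed}, since $\alpha > 2$, $C$ is a directed cycle, so write its vertices as $u_0, u_1, \ldots, u_{k-1}$ with $u_i$ buying $(u_i, u_{i+1 \bmod k})$.

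**The core argument via Lemma~\ref{crucial}.** The plan is to pick an antipodal pair on $C$ and play the edge-swap argument of Lemma~\ref{crucial} in both directions. Take $v = u_0$ and let $u = u_j$ with $j = \lfloor k/2 \rfloor$ be an antipodal vertex. Now $u_0$ buys $(u_0, u_1)$ and $u_j$ buys $(u_j, u_{j+1})$. Because $C$ is a \emph{min} cycle, the two arcs of $C$ between $u_0$ and $u_j$ are both shortest paths in $G$, so there is a shortest path tree $T$ rooted at $u_j$ that uses one arc of $C$ as a path from $u_j$ down to $u_0$ — and I can choose the arc so that the edge $(u_0, u_1)$ lies on it with $u_1$ the parent of $u_0$ (choose the arc $u_j, u_{j+1}, \ldots, u_{k-1}, u_0$, traversed "backwards" as a path rooted at $u_j$, so that moving toward the root from $u_0$ goes $u_0 \to u_{k-1}$; hmm, I need the parent of $u_0$ to be $u_1$, so instead take the arc through $u_1$: $u_j, u_{j-1}, \ldots, u_1, u_0$, which is a shortest path of length $j \le \lfloor k/2\rfloor = d_C(u_0,u_j)$). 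With $u_1$ the parent of $u_0$ in $T$, Lemma~\ref{crucial} (with $a = u_0$, $a_1 = u_1$, $b = u_j$) gives $D_G(u_0) \le D_G(u_j) + 1$. Symmetrically, building a shortest path tree $T'$ rooted at $u_0$ along the arc through $u_{j+1}$, so that $u_{j+1}$ is the parent of $u_j$, Lemma~\ref{crucial} (with $a = u_j$, $a_1 = u_{j+1}$, $b = u_0$) gives $D_G(u_j) \le D_G(u_0) + 1$. These two together are consistent, so a single swap is not enough — the bound $k < 2\alpha - 1$ has not yet been used, and I need to extract the "$-\alpha$" term from the second part of Lemma~\ref{crucial}.

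**Extracting the contradiction.** The refinement: I should find a vertex on $C$ that buys \emph{two} edges not both lying in the chosen shortest path tree, so the stronger conclusion $D_G(a) \le D_G(b) + 1 - \alpha$ applies. On a directed cycle each vertex buys exactly one edge of $C$, so the second bought edge must come from outside $C$ — which we don't control — OR, and this is the real idea, I use the hypothesis $k < 2\alpha - 1$ to show that when $a = u_0$ swaps $(u_0,u_1)$ for $(u_0, u_j)$, it can \emph{additionally} delete the edge $(u_0, u_1)$'s counterpart: wait, more carefully, the point is that after $u_0$ buys the shortcut to $u_j$, the far arc of $C$ (length $\lceil k/2 \rceil$) becomes redundant, and since $\lceil k/2\rceil$-many... no. Let me reconsider: the clean route is that the min cycle being short forces one of $u_0, u_j$ to be buying two edges that are both "swappable" simultaneously. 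Actually the cleanest: apply Lemma~\ref{crucial} to get $D_G(u_0) \le D_G(u_j)+1$ and $D_G(u_j)\le D_G(u_0)+1$, hence $|D_G(u_0) - D_G(u_j)| \le 1$; then consider $u_0$ swapping $(u_0, u_1)$ for $(u_0, u_j)$ — in the new graph $G'$, the distance from $u_0$ to $u_{\lceil k/2\rceil}$-region drops, and comparing creation-plus-usage costs, $u_0$ saves nothing in creation but the usage change is governed by $k$; summing over the antipodal pair and using directedness should collapse to an inequality of the form $2\alpha - 1 \le k$, contradicting $k < 2\alpha - 1$.

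\begin{proof}[Proof sketch of the plan]
It suffices to bound the length of min cycles, since any shortest cycle through an edge is a min cycle. Let $C$ be a min cycle with vertices $u_0, \ldots, u_{k-1}$; by Lemma~\ref{length1}, $k \ge \alpha + 2 > 4$, and by Lemma~\ref{directed}, $C$ is directed, say $u_i$ buys $(u_i, u_{i+1 \bmod k})$. Put $j = \lfloor k/2 \rfloor$. Because $C$ is a min cycle, both arcs of $C$ between $u_0$ and $u_j$ are shortest paths, so there is a shortest path tree $T$ of $G$ rooted at $u_j$ in which $u_1$ is the parent of $u_0$; Lemma~\ref{crucial} with $a = u_0$, $a_1 = u_1$, $b = u_j$ yields $D_G(u_0) \le D_G(u_j) + 1$. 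Symmetrically $D_G(u_j) \le D_G(u_0) + 1$. It remains to turn the hypothesis $k < 2\alpha - 1$ into the extra $-\alpha$ term: one shows that the shorter length of $C$ forces a vertex of the antipodal pair to profit from a double deviation, contradicting equilibrium; carrying this out with Lemma~\ref{crucial}'s second clause gives $k \ge 2\alpha - 1$.
\end{proof}

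**Main obstacle.** The hard part is the last step: establishing that some vertex on $C$ is simultaneously buying a second redundant edge, or otherwise massaging the single-swap inequalities so that the quantitative hypothesis $k < 2\alpha - 1$ produces a contradiction rather than a vacuous statement. The two one-step applications of Lemma~\ref{crucial} are individually free but mutually consistent, so the real content — and where I expect the actual proof to do something slightly clever, likely choosing the antipodal vertex or the auxiliary root so that a non-tree edge of $C$ itself becomes the deletable "$(a,a_2)$" — is exactly in activating the $-\alpha$ term.
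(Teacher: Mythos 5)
Your setup is correct and matches the paper's: reduce to min cycles, invoke Lemma~\ref{directed} to get a directed cycle, and use Lemma~\ref{crucial} on an antipodal pair to compare eccentricities. But the step you explicitly leave open --- how the hypothesis $|C| < 2\alpha-1$ produces a contradiction --- is the entire content of the lemma, and your conjectured route (activating the $-\alpha$ clause of Lemma~\ref{crucial} via a ``double deviation'') is not how it goes and would be hard to make work: on a directed min cycle each vertex buys exactly one cycle edge, so there is no second deletable edge under your control, exactly as you worry.

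The actual finishing move is a plain edge \emph{removal}, i.e.\ a strengthening of Lemma~\ref{length1}'s argument rather than another swap. Let $v$ buy $(v,v_1)$ on the directed min cycle $C$, and let $u_1$ be a vertex that is antipodal to \emph{both} $v$ and $v_1$ (for even $|C|$ take $u_1$ adjacent to the antipode $u$ of $v$ on the $v_1$-side; for odd $|C|$ take the appropriate one of the two antipodes of $v$), so that the arc from $v$ to $u_1$ avoiding $(v,v_1)$ has length $\lfloor(|C|-1)/2\rfloor$. One application of Lemma~\ref{crucial} with $a=u_1$, $b=v$ (possible since $u_1$ buys its outgoing cycle edge, whose other endpoint is the parent of $u_1$ in a shortest path tree rooted at $v$) gives $D_G(u_1)\le D_G(v)+1$; note you only need this one direction, not the symmetric pair. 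Now let $v$ simply delete $(v,v_1)$, saving $\alpha$ in creation cost. In the new graph $G'$ the arc to $u_1$ survives, so for every $x$,
\[
d_{G'}(v,x)\le d_{G'}(v,u_1)+d_G(u_1,x)\le \tfrac{|C|-1}{2}+D_G(u_1)<(\alpha-1)+D_G(v)+1=D_G(v)+\alpha,
\]
where the strict inequality uses $|C|<2\alpha-1$. The usage cost thus rises by strictly less than $\alpha$ while the creation cost falls by $\alpha$, contradicting equilibrium. This is where the bound doubles from $\alpha+2$ to $2\alpha-1$: the detour only needs to reach the antipode (half the cycle) because the antipode's eccentricity is already within $1$ of $v$'s. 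Without this idea your argument does not close.
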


\begin{proof}
	For $\alpha \leq 2$, by Lemma~\ref{length1}, the length of any cycle is at least $\alpha + 2 > 2\alpha - 1$, so we assume $\alpha > 2$ from now on. 
	
	Since for every cycle in the graph, we can find a min cycle shorter than it, it is sufficient to prove that every equilibrium graph has no min cycle of length less than $2\alpha - 1$. Suppose for contradiction there is a min cycle $C$ in the equilibrium graph, whose length is less than $2\alpha - 1$. By Lemma~\ref{directed}, $C$ is a directed cycle. Let $(v, v_1)$ be an edge of $C$ bought by $v$. 
	
	If $C$ is an even-length cycle, we denote the antipodal vertex by $u$ and $u$ buys the edge $(u, u_1)$. Observe that the antipodal vertex of $u_1$ is $v_1$. If $C$ is an odd-length cycle, we denote the antipodal vertices by $u, u_1$ and w.l.o.g., assume that $u$ buys the edge $(u, u_1)$. Observe that the antipodal vertices of $u_1$ are $v$ and $v_1$. Thus, in both cases, we can find a shortest path tree $T$ rooted at $u_1$ such that $(v, v_1)$ is not contained in $T$, and $d_G(v, u_1) = \lfloor \frac{|C|-1}{2} \rfloor$. In this directed cycle $C$, $u_1$ also buys the edge $(u_1, u_2)$ and $u_2$ is the parent of $u_1$ in the shortest path tree rooted at $v$. By Lemma~\ref{crucial} (where $a = u_1, a_1 = u_2, b = v$), we have $D_G(u_1) \leq D_G(v) + 1$.

	Consider the strategy change in which $v$ removes the edge $(v, v_1)$ and $G'$ is the graph after the removal. For every vertex $x$, $d_{G'}(v, x) \leq d_{G'}(v, u_1) + d_{G'}(u_1, x) \leq \frac{|C|-1}{2} + d_G(u_1, x) < \alpha - 1 + D_G(u_1) \leq D_G(v) + \alpha$. Hence by applying this change, it decreases its creation cost by $\alpha$ and increases its usage cost by stricly less than $\alpha$, which is a contradiction.
\end{proof}

\subsection{Lower Bound for Average Degree}
\label{lower_section}

This subsection proves a lower bound of $2+\frac{1}{5}$ on the average degree of biconnected components (Lemma~\ref{lower}). The key idea is to show that any 2-degree path is no longer than $3$ (Lemma $\ref{path}$) so that the number of 2-degree vertices can be actually bounded.

\begin{lemma}
	\label{D_bound}
	For $\alpha > 2$, if $G$ is a non-tree equilibrium graph and $H$ is a biconnected component of $G$, then for every vertex $v \in V(H)$, $D_G(v) \leq rad(G) + 2$.
\end{lemma}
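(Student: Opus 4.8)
The plan is to fix a central vertex $c$, so $D_G(c) = rad(G)$, and to apply Lemma~\ref{crucial} with $b = c$ in the following basic form, which I will use repeatedly: \emph{if a vertex $w$ buys an edge $(w,x)$ with $d_G(c,x) \le d_G(c,w)$, then $D_G(w) \le rad(G)+1$.} Indeed, if $d_G(c,x) = d_G(c,w)$ the edge $(w,x)$ lies on no shortest path tree rooted at $c$, and if $d_G(c,x) = d_G(c,w)-1$ there is a shortest path tree rooted at $c$ in which $x$ is the parent of $w$; in either case the hypothesis of Lemma~\ref{crucial} is met. Equivalently, if $D_G(w) > rad(G)+1$ then every edge bought by $w$ runs to a vertex strictly farther from $c$. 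Since also $D_G(v) \le d_G(c,v) + rad(G)$ always holds, the conclusion is immediate when $d_G(c,v) \le 2$, so I may assume $d_G(c,v) \ge 3$.

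Suppose for contradiction that some vertex of $V(H)$ has $D_G > rad(G)+2$, and pick such a $v$ with $d_G(c,v)$ minimum; then $d_G(c,v) \ge 3$. By Corollary~\ref{buyone}, $v$ buys an edge $(v,v_1)$ of $H$, and by Lemmas~\ref{contain} and~\ref{directed} this edge lies on a directed min cycle $C$; let $v'$ be the predecessor of $v$ on $C$, so $v'$ buys $(v',v)$. If $d_G(c,v) \le d_G(c,v')$, the basic form applied to $(w,x) = (v',v)$ yields $D_G(v') \le rad(G)+1$ and hence $D_G(v) \le D_G(v')+1 \le rad(G)+2$ since $v$ and $v'$ are adjacent, a contradiction. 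So $d_G(c,v') = d_G(c,v)-1$; then $D_G(v') \le rad(G)+2$ by minimality of $d_G(c,v)$, and combined with $D_G(v') \ge D_G(v)-1 > rad(G)+1$ this pins down $D_G(v') = rad(G)+2$ and $D_G(v) = rad(G)+3$. (One can push this further: because $D_G(v) > rad(G)+1$, $v$ does not buy its edge toward $c$, so the neighbor $q$ of $v$ with $d_G(c,q) = d_G(c,v)-1$ buys $(q,v)$, and Lemma~\ref{crucial} applied to $(q,v)$ with root $c$ gives $D_G(q) \le rad(G)+1$ --- and then $D_G(v) \le rad(G)+2$ --- unless $q$ is the unique neighbor of $v$ one step closer to $c$; iterating shows the shortest path from $v$ toward $c$ is forced and oriented toward $v$.)

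The crux, and the part I expect to be hardest, is to rule out $D_G(v) = rad(G)+3$. Propagating the eccentricity bound one adjacency at a time only gives $D_G(v) \le rad(G) + d_G(c,v)$, which is useless; instead one needs a single long jump toward a vertex already known to have small eccentricity. The natural device is the antipodal construction in the proof of Lemma~\ref{length2}: it produces a vertex $u$ of $C$ at distance about $|C|/2$ from $v$ together with a shortest path tree rooted at $u$ that does not misuse $(v,v_1)$, so Lemma~\ref{crucial} gives $D_G(v) \le D_G(u)+1$; it then suffices to show $D_G(u) \le rad(G)+1$, which by the basic form holds \emph{unless} the cycle edge bought by $u$ also points strictly away from $c$ and $u$ is the unique neighbor of its head one step closer to $c$. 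Excluding this last configuration is the delicate point: I would combine the fact that the increments of $d_G(c,\cdot)$ around the cyclic sequence $C$ sum to zero --- so some vertex of $C$ buys a non-uphill edge and hence has eccentricity at most $rad(G)+1$ --- with the ``furthermore'' clause of Lemma~\ref{crucial} applied to any vertex of $C$ that buys two edges, and I would treat the vertex of $V(H)$ nearest $c$ (the gateway of $H$) by a separate direct estimate. Fitting these pieces together is exactly where a careful case analysis will be required.
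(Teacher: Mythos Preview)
Your proposal is not a complete proof: you explicitly stop at ``the part I expect to be hardest'' and leave the case $D_G(v)=rad(G)+3$ to an unspecified case analysis involving antipodal vertices on the min cycle and the ``furthermore'' clause of Lemma~\ref{crucial}. That plan may be workable, but as written it is a sketch of a sketch; none of the pieces you list (the sum-to-zero argument around $C$, the gateway vertex estimate) is actually carried out, and there is no clear reason the cycle $C$ should supply a vertex $u$ with $D_G(u)\le rad(G)+1$ in the right position relative to $v$.

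More importantly, you are missing the simple idea that makes the whole detour through min cycles unnecessary. Fix a shortest path tree $T$ rooted at the central vertex $v_0$ and look at $T_H=H\cap T$. By Corollary~\ref{buyone}, $v$ buys some edge $(v,u)$ in $H$. If $(v,u)$ is a non-tree edge or $u$ is the parent of $v$ in $T$, your own ``basic form'' gives $D_G(v)\le rad(G)+1$ immediately. The only remaining case is that $u$ is a child of $v$ in $T$. Now simply \emph{descend to a leaf $s$ of $T_H$ inside the subtree rooted at $u$}. Such an $s$ exists (the subtree is nonempty), and since $s$ is a leaf of $T_H$ every edge it buys in $H$ is either non-tree or leads toward the root, so by the basic form $D_G(s)\le rad(G)+1$. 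Finally, in any shortest path tree rooted at $s$ the vertex $u$ is the parent of $v$ (because $s$ is a descendant of $u$ in $T$), so Lemma~\ref{crucial} applied to $v$ swapping $(v,u)$ for $(v,s)$ gives $D_G(v)\le D_G(s)+1\le rad(G)+2$.

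The two key moves you did not find are: (i) work with the edge $v$ itself buys in $H$ rather than with its predecessor on a min cycle, and (ii) when that edge points ``down'' in $T$, jump all the way to a leaf of $T_H$ in one swap instead of propagating eccentricity bounds one adjacency at a time or reasoning along the cycle. This avoids the entire antipodal/case-analysis program.
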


\begin{proof}
	Let $T$ be the shortest path tree rooted at some central vertex $v_0$ such that $D_G(v_0) = rad(G)$. Let $T_H$ be the intersection of $H$ and $T$, i.e., $$T_H = H\cap T = \left<V(H), E(H)\cap E(T)\right>.$$ Observe that $T_H$ is still a tree. By Corollary~\ref{buyone}, every vertex $v\in V(H)$ buys at least one edge in this biconnected component. Suppose $v$ buys the edge $(v, u)$ in $H$.
	
	If $(v, u)$ is a non-tree edge or $u$ is the parent of $v$, consider the strategy change in which $v$ swaps the edge $(v, u)$ to $(v, v_0)$. Since $v$ does not apply this change, we have $D_G(v) \leq D_{G'}(v) \leq D_{G'}(v_0) + 1 \leq D_G(v_0) + 1 = rad(G) + 1$. 

	If $v$ is the parent of $u$, we can always find such a vertex $s$ which is a descendant of $v$ and is also a leaf vertex of $T_H$. By Corollary~\ref{buyone}, $s$ buys at least one edge in $H$. Since it is already the leaf vertex of $T_H$, it could only buy a non-tree edge or a leading-up edge. Thus we have $D_G(s) \leq rad(G) + 1$. Then we consider the strategy change in which $v$ swaps the edge $(v, u)$ to $(v, s)$. Note in the shortest path tree rooted at $s$, $u$ is the parent of $v$. Since $v$ does not apply this change, we have $D_G(v) \leq D_{G'}(v) \leq D_{G'}(s) + 1 \leq D_G(s) + 1 \leq D_G(v_0) + 2 = rad(G) + 2$. 
\end{proof}

Let $H$ be an arbitrary biconnected component of the equilibrium graph $G$. We define a family of sets $\{S(v)\}_{v\in V(H)}$ such that
\[
S(v) = \{w\in V \mid v = \arg \min_{u \in V(H)} d_G(u, w)\}.
\]
Every vertex $w$ in $V$ is assigned to its closest vertex $v$ in $V(H)$. By the definition, we have 
\begin{itemize}
    \item $\cup_{v\in V(H)} S(v) = V$;
    \item for $u, v\in V(H)$, $S(u)\cap S(v) = \emptyset$ ;
    \item and for $v\in V(H)$, $S(v)\cap V(H) = \{v\}$. 
\end{itemize}


The following Lemma~\ref{outsider} shows for every vertex $v\in V(H)$, the furthest vertex never lies in $S(v)$, nor in $S(u)$ when $u$ is near to $v$.

\begin{lemma}
	\label{outsider}
	For $\alpha > 2$, if $G$ is a non-tree equilibrium graph and $H$ is a biconnected component of $G$, then for every vertex $v \in V(H)$ and every vertex $w \in S(v)$, $d_G(v, w) \leq D_G(v) + 2 - \alpha$.
\end{lemma}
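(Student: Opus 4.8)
The plan is to argue by contradiction and to produce a profitable deviation for a carefully chosen vertex of $S(v)$; the saving of roughly $\alpha$ will come, exactly as in the second half of Lemma~\ref{crucial}, from dropping a bought edge after a swap. Since the right-hand side does not depend on $w$, it suffices to treat the vertex $w^*$ maximizing $d_G(v,w)$ over $w\in S(v)$; put $\ell=d_G(v,w^*)$ and suppose $\ell\ge D_G(v)+3-\alpha$. I would first record two structural facts. (a) Every edge of $H$ incident to $v$ lies on a min cycle (Lemma~\ref{contain}) of length at least $2\alpha-1$ (Lemma~\ref{length2}); hence $D_G(v)\ge\alpha-1$, and in particular $\ell\ge 2$. (b) The vertex $v$ separates $S(v)\setminus\{v\}$ from $V\setminus S(v)$: a shortest path from an $S(v)$-vertex to $v$ cannot pass through another vertex of $H$ (that vertex would be strictly closer), and since $H$ is a maximal biconnected subgraph no path joins two vertices of $H$ through vertices outside $H$; combining these gives the separation. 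Consequently, if $z$ realizes $D_G(v)$ and $z\notin S(v)$ then $d_G(w^*,z)=\ell+D_G(v)$, so $D_G(w^*)\ge\ell+D_G(v)$; and if every vertex realizing $D_G(v)$ lies in $S(v)$ then $D_G(v)=\ell$, in which case taking $z$ to be a far vertex on the min cycle through $v$ still gives $D_G(w^*)\ge\ell+\alpha-1$.

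The main step is the deviation. Fix a shortest path $v=x_0,x_1,\dots,x_\ell=w^*$ and note that, by (b), every neighbor of $w^*$ lies in $S(v)$ at distance $\ell-1$ or $\ell$ from $v$. If $w^*$ owns any incident edge $(w^*,n)$: then in a suitable shortest-path tree rooted at $v$ the edge $(w^*,n)$ is either non-tree or has $n$ as the parent of $w^*$, so applying Lemma~\ref{crucial} after swapping $(w^*,n)$ for $(w^*,v)$ yields $D_G(w^*)\le D_G(v)+1$; combined with $D_G(w^*)\ge \ell+D_G(v)$ this forces $\ell\le 1$, and combined with $D_G(w^*)\ge\ell+\alpha-1$ (recall $D_G(v)=\ell$ in that case) it forces $\alpha\le 2$. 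Either way we contradict $\ell\ge 2$ and $\alpha>2$. Moreover if $w^*$ owns a second, non-tree edge the ``furthermore'' clause of Lemma~\ref{crucial} gives the sharper $D_G(w^*)\le D_G(v)+1-\alpha$. It therefore remains to handle the case that $w^*$ owns no edge at all; there one must move the deviation to a vertex of $S(v)$ that does own two edges one can treat simultaneously --- either a vertex on the path $x_0,\dots,x_\ell$ that buys an edge leading farther from $v$ together with its toward-$v$ tree edge, or, if the pendant part is a bare path with every edge bought toward its deeper endpoint, a vertex of $H$ lying far enough from $v$ on a min cycle that one of its bought cycle edges is non-tree when viewed from $w^*$ --- and then read off the required bound on $\ell$ from Lemma~\ref{crucial} together with the through-$v$ distance identity.

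The step I expect to be the main obstacle is precisely this last case: a deepest vertex owning at most one edge defeats the bare ``furthermore'' of Lemma~\ref{crucial}, and a single swap only gives $D_G(w^*)\le D_G(v)+1$, which is too weak by a factor of $\alpha$. Overcoming it calls for a careful bookkeeping of which endpoint pays for each edge along the path and inside the pendant tree $G[S(v)]$, combined with the identities forced by (b), the bounds $\alpha-1\le D_G(v)\le rad(G)+2$ (the upper one from Lemma~\ref{D_bound}), and a separate treatment of the degenerate small-$\ell$ subcases in which several of the swaps collapse. Once (b) is in place the distances themselves are essentially forced; the other delicate point is checking, for each use of Lemma~\ref{crucial}, that the shortest-path tree can be chosen so that the swapped edge is a parent edge, equivalently the dropped edge is non-tree.
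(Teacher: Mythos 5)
Your argument has a genuine gap, and it sits exactly where you predicted: the case in which $w^*$ owns no edge. This case is not exotic --- if $S(v)$ is a path hanging off $v$ in which every edge is bought by the endpoint closer to $v$, the deepest vertex buys nothing and your entire deviation machinery has no agent to act. The fixes you sketch do not close it. A vertex on the path $x_0,\dots,x_\ell$ that buys an edge leading away from $v$ together with its toward-$v$ edge need not exist. And applying Lemma~\ref{crucial} to a min-cycle vertex $u$ of $H$ with $b=w^*$ yields $D_G(u)\le D_G(w^*)+1$ (or $D_G(u)\le D_G(w^*)+1-\alpha$ with the furthermore clause), i.e.\ a \emph{lower} bound on $D_G(w^*)$; what you need in this case is an upper bound on $D_G(w^*)$ or directly on $\ell$, so the inequality points the wrong way. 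The parts you do complete (when $w^*$ buys an edge) are correct, and your separation fact (b) and the distance identities are sound, but the lemma is not proved.

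The missing idea is to keep $b=v$ and make the \emph{cycle} vertex the one who deviates, so that no assumption about edge ownership inside the pendant $S(v)$ is ever needed. Concretely: by Lemma~\ref{contain} the vertex $v$ lies on a min cycle $C$ with $|C|\ge 2\alpha-1$ (Lemma~\ref{length2}), which is directed (Lemma~\ref{directed}); let $u$ be an antipodal vertex of $v$ on $C$, so $d_G(v,u)\ge\frac{|C|-1}{2}\ge\alpha-1$ and $u$ buys a cycle edge that is non-tree or leads to its parent in a suitable shortest path tree rooted at $v$. Lemma~\ref{crucial} with $a=u$, $b=v$ gives $D_G(u)\le D_G(v)+1$. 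Now your fact (b) applies to the pair $(u,w)$ for \emph{every} $w\in S(v)$: since every $u$--$w$ path passes through $v$, one has $d_G(u,w)=d_G(u,v)+d_G(v,w)\ge(\alpha-1)+d_G(v,w)$, and since $d_G(u,w)\le D_G(u)\le D_G(v)+1$ this yields $d_G(v,w)\le D_G(v)+2-\alpha$ directly, with no contradiction hypothesis, no extremal choice of $w^*$, and no case split on who realizes $D_G(v)$. Your approach and the paper's share the separation fact and the use of the long directed min cycle, but the roles of $a$ and $b$ in Lemma~\ref{crucial} must be the reverse of the ones you chose for the hard case.
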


\begin{proof}
	By Lemma~\ref{contain}, $v$ is contained in a min cycle $C$. By Lemma~\ref{length2}, $|C|\geq 2\alpha-1$. Hence there exists a vertex $u\in V(H)$ such that $d_G(v, u) \geq \frac{|C|-1}{2} \geq \alpha - 1$. By Lemma~\ref{crucial}, we know $D_G(u) \leq D_G(v) + 1$. As $H$ is a biconnected component, every shortest $u$-$w$-path contains $v$. Therefore, $d_G(v, w) = d_G(u, w) - d_G(v, u) \leq D_G(u) - (\alpha - 1) \leq D_G(v) + 2 - \alpha$.
\end{proof}

\begin{lemma}
	\label{path}
	For $\alpha > 5$, if $G$ is a non-tree equilibrium graph and $H$ is a biconnected component of $G$, then every path $x_0, x_1,\ldots, x_k, x_{k+1}$ in $H$ with $deg_H(x_i) = 2$ for $1 \leq i \leq k$, satisfies $k \leq 3$. Moreover, if $k=3$, we have $D_G(x_0)=rad(G)$ and $D_G(x_{k+1})\neq rad(G)$.
\end{lemma}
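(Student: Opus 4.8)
We may assume $k\ge 2$. By Lemma~\ref{contain} the edge $(x_1,x_2)$ lies in a min cycle $C$ of $H$; since every $x_i$ with $1\le i\le k$ has $deg_H(x_i)=2$, the cycle $C$ must traverse both edges at each such $x_i$, hence $C$ contains the whole path $x_0,\dots,x_{k+1}$. By Lemma~\ref{directed}, $C$ is directed, so no internal vertex can buy both of its edges, and by Corollary~\ref{buyone} it buys at least one; thus each $x_i$ ($1\le i\le k$) buys exactly one of $(x_i,x_{i-1}),(x_i,x_{i+1})$. As the edge joining two consecutive internal vertices is paid by exactly one endpoint, consecutive internal vertices ``point the same way'', so all of them do; after possibly reversing the path, assume every $x_i$ ($1\le i\le k$) buys $(x_i,x_{i-1})$. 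Then $x_{k+1}$ buys $(x_{k+1},x_k)$ and $x_0$ does not buy $(x_0,x_1)$. Moreover $|C|\ge 2\alpha-1$ by Lemma~\ref{length2}, and being a min cycle, $C$ realizes the $G$-distances between all path vertices.

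\textbf{Step 2: eccentricities of path vertices.} The aim here is $D_G(x_j)\le rad(G)+1$ for every path vertex $x_j$ ($1\le j\le k+1$), together with $D_G(x_0)=rad(G)$. For the first, apply Lemma~\ref{crucial} to the bought edge of $x_j$ with a shortest path tree rooted at a central vertex $v_0$: if that tree can be taken with the bought edge non-tree or leading upward, then $D_G(x_j)\le D_G(v_0)+1=rad(G)+1$; the sole obstruction is that every shortest $v_0$-$x_{j-1}$ path ends with the edge $(x_j,x_{j-1})$, and I handle it by rerouting against a shortest path tree rooted at a suitable path vertex (here the fact that $C$ realizes path distances, so such a tree contains a monotone initial segment of the path, is crucial), or by noting that this obstruction forces the BFS-levels of $x_{j-2},x_{j-1},x_j$ from $v_0$ and thereby pins $D_G(x_0)=rad(G)$. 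I also record the elementary estimate $D_G(x_j)\le D_G(x_0)+1$, obtained from the swap that replaces the bought edge of $x_j$ by $(x_j,x_0)$.

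\textbf{Step 3: the bound $k\le 3$.} Assume $k\ge 4$, so $x_2$ and $x_3$ are strictly interior. Let $w$ realize $D_G(x_2)$. By Lemma~\ref{outsider}, $w\notin S(x_2)$, hence a shortest $x_2$-$w$ path leaves $x_2$ through $x_1$ or $x_3$; since the intermediate vertices have degree $2$, it then runs along the path until it either enters some $S(x_j)$ or exits the path at an endpoint into the rest of $H$. Using (i) $D_G(x_j)\le rad(G)+1$, (ii) the shallow-subtree bound $d_G(x_j,w')\le D_G(x_j)+2-\alpha$ for $w'\in S(x_j)$ (Lemma~\ref{outsider}), and (iii) $|C|\ge 2\alpha-1$, one checks that for $\alpha>5$ no placement of $w$ is consistent with $k\ge 4$: every placement would force either a path vertex of eccentricity strictly below $rad(G)$ or a cycle through a path edge of length strictly below $2\alpha-1$. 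The same computation run from $x_3$ kills the last remaining case. Hence $k\le 3$.

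\textbf{Step 4: the case $k=3$, and the main obstacle.} When $k=3$ the inequalities of Step~3 hold with no slack; following which estimate is tight yields $D_G(x_0)=rad(G)$, and if $D_G(x_{k+1})$ were also $rad(G)$ the system would be tight from both ends and reproduce the contradiction of Step~3, so $D_G(x_{k+1})\ne rad(G)$. The delicate part is Step~3 together with the obstruction handling in Step~2: the distances $d_G(v_0,x_j)$ and $d_G(x_0,x_j)$ need not be monotone along the path --- the path can be either the shorter or the longer of the two $x_0$-$x_{k+1}$ arcs of $C$ --- so one must split on this, and in the ``folding'' case first locate the path vertex farthest from $x_0$; driving the threshold down to $\alpha>5$ requires every estimate (i)--(iii) to be used at its sharpest.
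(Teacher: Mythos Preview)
Your proposal has a genuine gap in Step~2. You claim $D_G(x_j)\le rad(G)+1$ for all $1\le j\le k+1$, but the justification does not establish it: when every shortest $v_0$-$x_{j-1}$ path passes through $x_j$, Lemma~\ref{crucial} with $b=v_0$ is blocked, and neither ``rerouting against a shortest path tree rooted at a suitable path vertex'' nor the assertion that the obstruction ``pins $D_G(x_0)=rad(G)$'' is actually worked out. In fact this bound is \emph{stronger} than what the lemma asserts: the paper's own argument (via Lemma~\ref{D_bound} together with the deviation below) shows that if a $k=3$ path exists in equilibrium then necessarily $D_G$ of the middle interior vertex equals $rad(G)+2$. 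Hence establishing $D_G(x_j)\le rad(G)+1$ for all interior $j$ would already prove $k\le 2$, which is more than the lemma claims and which you have not proved. Since Step~3 then invokes ``(i) $D_G(x_j)\le rad(G)+1$'' as an unconditional input, the whole chain does not go through as written. The ``elementary estimate'' $D_G(x_j)\le D_G(x_0)+1$ has the same defect: the swap of $(x_j,x_{j-1})$ for $(x_j,x_0)$ only yields this via Lemma~\ref{crucial} when some shortest path tree rooted at $x_0$ has $x_{j-1}$ as the parent of $x_j$ or has $(x_j,x_{j-1})$ non-tree, and this fails precisely when the path is the longer arc of $C$ and $x_j$ sits past the antipode of $x_0$.

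Steps~3 and~4 are outlines rather than proofs: ``one checks that for $\alpha>5$ no placement of $w$ is consistent'' and ``following which estimate is tight yields $D_G(x_0)=rad(G)$'' hide the computations, and the short-arc/long-arc split you flag in Step~4 is never carried out. The paper takes a different and much more direct route. It uses only the coarse range $D_G(x_i)\in[rad(G),rad(G)+2]$ from Lemma~\ref{D_bound}; among four consecutive vertices $x_0,x_1,x_2,x_3$ this forces some $j\in\{0,1,2\}$ with $D_G(x_j)\ge D_G(x_{j+1})$. For that $j$ the paper exhibits a concrete profitable \emph{deviation}: $x_j$ swaps its bought edge $(x_j,x_{j+1})$ for the shortcut $(x_j,x_{j+3})$, and the strict improvement is verified by partitioning $V$ into $S(x_j)\cup S(x_{j+1})\cup S(x_{j+2})$ (handled by Lemma~\ref{outsider}), a ``right'' set whose shortest $x_j$-paths already pass through $x_{j+3}$, and a ``left'' remainder (handled using $D_G(x_j)\ge D_G(x_{j+1})$). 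The $k=3$ addendum then falls out by applying the same deviation at the boundary indices, which also fixes the orientation-dependent identification of which endpoint has eccentricity $rad(G)$.
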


\begin{proof}
	Since $deg_H(x_i) = 2$ for $1 \leq i \leq k$, this path should be contained in some min cycle, which means it is a directed path by Lemma~\ref{directed}. W.l.o.g. we assume $x_i$ buys the edge $(x_i, x_{i+1})$ for $0 \leq i \leq k$.
	
	Suppose for contradiction there exists such a path with $k \geq 4$. By Lemma~\ref{D_bound}, we have $D_G(x_i) \in [rad(G), rad(G) + 2]$ for $0 \leq i \leq 3$. Therefore there exists $0 \leq j \leq 2$ such that $D_G(x_j) \geq D_G(x_{j+1})$. Consider the strategy change in which $x_j$ swaps the edge $(x_j, x_{j+1})$ to $(x_j, x_{j+3})$ and $G'$ is the graph after the swap. We will split the vertices $V$ into three parts (see Figure~\ref{four_set}) and show for vertex $w$ in any part, $d_{G'}(x_j, w) < D_G(x_j)$ for $\alpha > 5$, so that $D_{G'}(x_j) < D_G(x_j)$, a contradiction with equilibrium.

    \begin{figure}[h]
        \centering
        \begin{tikzpicture}
            \node[circle, draw, inner sep=0cm, minimum size=.5cm] (x0) at (-4, 0) {$x_0$};
            \node[circle, draw, inner sep=0cm, minimum size=.5cm] (x1) at (-3, 1.5) {$x_1$};
            \node[circle, draw, inner sep=0cm, minimum size=.5cm] (x2) at (-1, 2) {$x_2$};
            \node[circle, draw, inner sep=0cm, minimum size=.5cm] (x3) at (1, 2) {$x_3$};
            \node[circle, draw, inner sep=0cm, minimum size=.5cm] (x4) at (3, 1.5) {$x_4$};
            \node[circle, draw, inner sep=0cm, minimum size=.5cm] (x5) at (4, 0) {$x_5$};
            \draw[->] (x0) to (x1);
            \draw[->] (x1) to (x2);
            \draw[->, very thick] (x1) to [bend right=15] (x4);
            \draw[thick] (-1.8, 1.56)--+(-0.35, 0.35);
            \draw[thick] (-2.15, 1.56)--+(0.35, 0.35);
            \draw[->] (x2) to (x3);
            \draw[->] (x3) to (x4);
            \draw[->] (x4) to (x5);
            \draw (-3.9, 2.3) ellipse [x radius=1.8, y radius=0.8, rotate=140];
            \node[] at (-3.9, 2.3) {$S(x_1)$};
            \draw (-1.2, 3.2) ellipse [x radius=1.8, y radius=0.8, rotate=100];
            \node[] at (-1.2, 3.2) {$S(x_2)$};
            \draw (1.2, 3.2) ellipse [x radius=1.8, y radius=0.8, rotate=80];
            \node[] at (1.2, 3.2) {$S(x_3)$};
            \draw (-3, -0.7) ellipse [x radius=2.2, y radius=1.4, rotate=180];
            \node[] at (-3, -0.7) {$S_{left}$};
            \draw (2.6, 0) ellipse [x radius=2.5, y radius=1.6, rotate=40];
            \node[] at (2.6, 0) {$S_{right}$};
            \draw[dashed] (-1.2, -0.3) to (0.8, -0.3);
            \draw[dashed] (-1.2, -0.8) to (0.8, -0.8);
            \draw[dashed] (-1.2, -1.3) to (0.8, -1.3);
        \end{tikzpicture}
        \caption{Here is an example of the partition when $k=4$ and $j=1$. Consider $x_1$ swaps $(x_1, x_2)$ with $(x_1, x_4)$. First, by Lemma~\ref{outsider}, we know $x_1$ is very near to $S({x_1})$, $S({x_2})$ and $S({x_3})$ in $G$ while the swap only increases the distances a little. Then, by the definition of $S_{right}$, the swap makes $x_1$ strictly nearer to $S_{right}$. Finally, thanks to $D_G(x_1) \geq D_G(x_{2})$, $x_1$ can actually get strictly nearer to $S_{left}$ by taking a detour through $x_4$.}
        \label{four_set}
    \end{figure}
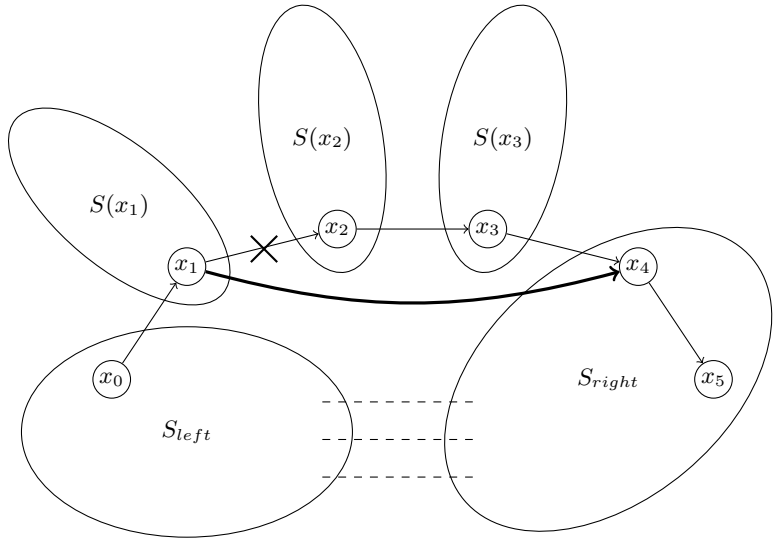

	\begin{enumerate}[label=(\alph*)]
		\item \textbf{Set $S(x_j)\cup S(x_{j+1})\cup S(x_{j+2})$:} By Lemma~\ref{outsider}, for every vertex $w\in S(x_j)$, $d_{G'}(x_j, w) = d_G(x_j, w) \leq D_G(v) + 2 - \alpha \leq rad(G) + 4 - \alpha < rad(G) \leq D_G(x_j)$. By Lemma~\ref{outsider}, for every vertex $w\in S(x_{j+t})$, $t=1,2$, $d_{G'}(x_j, w) \leq d_{G'}(x_j, x_{j+3}) + d_{G'}(x_{j+3}, x_{j+t}) + d_{G'}(x_{j+t}, w) = 1 + (3 - t) + d_{G}(x_{j+t}, w) \leq D_G(x_{j+t}) - t + 6 - \alpha < D_G(x_j)$.
		\item \textbf{Set $S_{right}$ where for every $w$ there exists a shortest $x_j$-$w$-path contains $x_{j+3}$}: For every vertex $w\in S_{right}$, $d_{G'}(x_j, w) \leq d_{G'}(x_j, x_{j+3}) + d_{G'}(x_{j+3}, w) = 1 + d_G(x_{j+3}, w) = d_G(x_j, w) - 2 < D_G(x_j)$.
		\item \textbf{Set $S_{left}$ where for every $w$ there is no shortest $x_j$-$w$-path contains $x_{j+3}$}: For every vertex $w\in S_{left}$ satisfying $d_G(x_j, w) < D_G(x_j)$, easy to see $d_{G'}(x_j, w) = d_G(x_j, w) < D_G(x_j)$. For every vertex $w\in S_{left}$ satisfying $d_G(x_j, w) = D_G(x_j)$, we claim there exists a shortest $x_{j+1}$-$w$-path contains $x_{j+3}$. Otherwise, $D_G(x_{j+1}) \geq d_G(x_{j+1}, w) = d_G(x_{j+1}, x_{j}) + d_G(x_{j}, w) = D_G(x_j) + 1 > D_G(x_j) \geq D_G(x_{j+1})$, a contradiction. Thus $d_{G'}(x_j, w) \leq d_{G'}(x_j, x_{j+3}) + d_{G'}(x_{j+3}, w) = 1 + d_G(x_{j+3}, w) = d_G(x_{j+1}, w) - 1 \leq D_G(x_{j+1}) - 1 < D_G(x_j)$.
	\end{enumerate}
	Combining them together, we have $D_{G'}(x_j) < D_G(x_j)$, contradicted with the equilibrium.
	
	When $k=3$, if $D_G(x_0) \geq D_G(x_1)$ or $D_G(x_1) \geq D_G(x_2)$, we can find a similar contradiction, so we must have $D_G(x_0) = rad(G), D_G(x_1)=rad(G)+1, D_G(x_2) = rad(G) + 2$. As $x_2$ does not choose to swap $(x_2, x_3)$ to $(x_2, x_4)$, we have $D_G(x_2) \leq D_G(x_4) + 1$, hence $D_G(x_{k+1}) = D_G(x_4) \geq rad(G) + 1$.
\end{proof}

\begin{corollary}
	\label{3neighbor}
	For $\alpha > 5$, if $G$ is a non-tree equilibrium graph and $H$ is a biconnected component of $G$, then for every vertex $v \in V(H)$, $v$ satisfies one of the following conditions:
	\begin{enumerate}[label=(\alph*)]
		\item there exists a vertex $u \in N_1(v)$ such that $deg_H(u) \geq 3$;
		\item $deg_H(u) = 2$ for $u \in N_1(v)$ and $deg_H(u) \geq 3$ for $u \in N_2(v)\backslash N_1(v)$, 
	\end{enumerate}
	where $N_k(v) = \{u\in H \mid d_H(u, v) \leq k\}$.
\end{corollary}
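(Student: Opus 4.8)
The plan is to reduce everything to Lemma~\ref{path}, which caps a maximal run of degree-$2$ vertices in $H$ at three internal vertices. First I would dispose of the trivial case. Since every vertex of a biconnected component has degree at least $2$, a vertex $u$ fails $deg_H(u)\ge 3$ precisely when $deg_H(u)=2$. If some $u\in N_1(v)$ — possibly $u=v$ itself — has $deg_H(u)\ge 3$, then condition (a) holds and we are done. So assume henceforth that every vertex of $N_1(v)$ has degree exactly $2$; in particular $v$ has degree $2$, say with neighbours $v_1,v_2$, and both $v_1,v_2$ have degree $2$ as well.

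Under this assumption I must establish condition (b): every $w$ with $d_H(v,w)=2$ satisfies $deg_H(w)\ge 3$. Suppose not, and pick such a $w$ with $deg_H(w)=2$. As $d_H(v,w)=2$, $w$ is adjacent to one of $v$'s two neighbours; by symmetry say $w\sim v_1$. Let $v_2'$ be the neighbour of $v_2$ other than $v$ and let $w'$ be the neighbour of $w$ other than $v_1$; these exist because $deg_H(v_2)=deg_H(w)=2$. I claim that $w',w,v_1,v,v_2,v_2'$ is a path in $H$ whose four internal vertices $w,v_1,v,v_2$ all have degree $2$. Granting the claim, this is a degree-$2$ path with $k=4>3$, contradicting Lemma~\ref{path}; hence no such $w$ exists and condition (b) holds.

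The only real work is checking that $w',w,v_1,v,v_2,v_2'$ are six genuinely distinct vertices. The required adjacencies $w'\sim w\sim v_1\sim v\sim v_2\sim v_2'$ hold by construction, and $w,v_1,v,v_2$ have degree $2$ by our standing assumptions together with the choice of $w$, so it remains only to rule out coincidences. Each possible collision forces a short cycle through these vertices: $v_1=v_2'$ or $v=v_2'$ or $v_1=v_2$ would give a cycle of length at most $3$; $w'=v_2$, $w=v_2'$, or $w\sim v_2$ would give a $4$-cycle; $w'=v_2'$ would give a $5$-cycle; and the remaining collisions (e.g.\ $w'=v$, $w=v_2$, $w'=v_1$) contradict $d_H(v,w)=2$ or the definition of $w'$. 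Since $\alpha>5$, Lemma~\ref{length1} forbids any cycle of length less than $\alpha+2>7$, so every such coincidence is impossible. I expect this final bookkeeping — enumerating the ways two of the six vertices could merge and exhibiting the forbidden cycle each time — to be the most tedious part, although each individual case is immediate from Lemma~\ref{length1}.
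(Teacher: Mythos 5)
Your proof is correct and follows essentially the same route as the paper: assume both conditions fail, assemble a path whose four internal vertices all have degree $2$ in $H$, and contradict Lemma~\ref{path}. The only difference is that you explicitly verify the six vertices are distinct via Lemma~\ref{length1}, a detail the paper leaves implicit.
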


\begin{proof}
	Suppose for contradiction there exists a vertex $v_0$ satisfies none of the conditions. Since $v_0$ does not satisfy condition (a), we have a path $v_3, v_1, v_0, v_2, v_4$ such that $deg_H(v_i) = 2$ for $0\leq i \leq 2$. Since $v_0$ does not satisfy condition (b), we have $deg_H(v_3) = 2$ or $deg_H(v_4) = 2$. In either case, we get a $2$-degree path with length $k\geq 4$, contradicted with Lemma~\ref{path}.
\end{proof}

\begin{lemma}
	\label{lower}
	For $a > 5$, if $G$ is a non-tree equilibrium graph, then for every biconnected component $H$ of $G$, $deg(H) \geq 2 + \frac{1}{5}$.
\end{lemma}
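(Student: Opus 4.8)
The plan is to split $V(H)$ into the set $V_2$ of degree-$2$ vertices and the set $V_{\ge 3}=V(H)\setminus V_2$ and to show that the ``extra degree'' concentrated on $V_{\ge 3}$ is enough to raise the average by $\tfrac15$. Since $2|E(H)|=\sum_{v\in V(H)}deg_H(v)=2|V_2|+\sum_{v\in V_{\ge 3}}deg_H(v)$, we have $deg(H)=2+\frac{\sum_{v\in V_{\ge 3}}(deg_H(v)-2)}{|V_2|+|V_{\ge 3}|}$, so it suffices to prove $5\sum_{v\in V_{\ge 3}}(deg_H(v)-2)\ge |V_2|+|V_{\ge 3}|$. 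A clean sufficient condition is $|V_2|\le 4|V_{\ge 3}|$, but the calculation below shows this is just barely too weak and one must keep the full sum.

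Next I would organize $V_2$ into maximal ``$2$-degree paths''. Let $H'$ be the multigraph obtained from $H$ by suppressing every degree-$2$ vertex: its edges are exactly these paths, $deg_{H'}(v)=deg_H(v)$ for every $v\in V_{\ge 3}$, and $|V_2|=\sum_{P\in E(H')}(\text{number of interior vertices of }P)$. Two observations make this usable. First, $H'$ is \emph{simple}: a loop or a pair of parallel edges of $H'$ would close up a cycle of $H$ of length at most $8<2\alpha-1$, contradicting Lemma~\ref{length2} since $\alpha>5$; hence every vertex of $H'$ has degree at least $3$, so $|V_{\ge 3}|\le \tfrac23|E(H')|$ and $\sum_{v\in V_{\ge 3}}(deg_H(v)-2)=2|E(H')|-2|V_{\ge 3}|$. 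Second, by Lemma~\ref{path} each $P\in E(H')$ has at most $3$ interior vertices. Feeding these into the displayed target, the claim reduces to $10|E(H')|-\sum_P(\#\text{interior of }P)\ge 11|V_{\ge 3}|$; using $|E(H')|\ge\tfrac32|V_{\ge 3}|$ together with the crude bound $\#\text{interior}\le 3$ only yields $\ge 10.5|V_{\ge 3}|$. So the whole game is to show the \emph{length-$3$} paths are relatively rare: writing $p_i$ for the number of $P$ with $i$ interior vertices, it is enough to establish $p_3\le 8p_0+5p_1+2p_2$.

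For this last step I would invoke the ``moreover'' part of Lemma~\ref{path}: a length-$3$ path is a directed path whose tail vertex lies in $V_{\ge 3}$ and is a \emph{central} vertex of $G$ (eccentricity $rad(G)$), while its head vertex lies in $V_{\ge 3}$ and is not central; along the path the eccentricities read $rad(G),rad(G)+1,rad(G)+2$ from the tail. Two consequences: the tails and the heads of length-$3$ paths are disjoint subsets of $V_{\ge 3}$, and — the crux — a single degree-$3$ vertex can be the tail of at most one length-$3$ path. I expect to obtain the latter by the usual equilibrium move: assuming $c$ is the tail of two length-$3$ paths, one lets a suitable interior vertex of one path swap its bought edge to a vertex near $c$ (its other length-$3$ neighbour, or the head), and uses Lemma~\ref{crucial} together with $D_G(c)=rad(G)$ and the pinned-down eccentricities to force an eccentricity drop, contradicting equilibrium. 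Given such a per-vertex bound, one assigns the interior vertices of every path to its endpoints — the nearer endpoint in general, and for a length-$3$ path the two nearer-to-tail interior vertices to the central tail and the last one to the head — and checks, degree by degree, that the surplus $5(deg_H(v)-2)$ of each $v\in V_{\ge 3}$ covers everything charged to it (non-central vertices and vertices of degree $\ge 4$ are automatically fine; only degree-$3$ central vertices are tight, and that is exactly what the scarcity statement handles).

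The main obstacle is precisely this scarcity statement for length-$3$ paths. Lemma~\ref{path} by itself only says paths are short, and that leaves an additive slack of $\tfrac12|V_{\ge 3}|$ in the degree count; closing it genuinely requires the eccentricity data of the two endpoints of a length-$3$ path plus a fresh equilibrium deviation, and getting the deviation to actually bite (the obvious edge removals tend to strand the remaining interior vertices) is the delicate point. The surrounding work — the reduction to $H'$, the simplicity of $H'$, and the degree bookkeeping — is routine.
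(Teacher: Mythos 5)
Your reduction to the inequality $p_3\le 8p_0+5p_1+2p_2$ is arithmetically correct, and your overall route (contract the $2$-degree paths into a multigraph $H'$, bound their lengths by Lemma~\ref{path}, then count degrees) is essentially the paper's, just with different bookkeeping. The genuine gap is exactly the one you flag yourself: the scarcity statement for paths with three interior vertices. You propose to prove that a degree-$3$ central vertex is the tail of at most one such path via a fresh equilibrium deviation, but you never exhibit a deviation that works, and as written the counting step is unproven, so the proof is incomplete. (A second, minor omission: you should note that $V_{\ge 3}\neq\emptyset$, i.e.\ $H$ is not itself a cycle; this follows from Lemma~\ref{length2} together with Lemma~\ref{path}.)

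The good news is that you do not need a statement as strong as ``at most one length-$3$ path per tail''; a weaker per-vertex bound already follows from the lemmas you cite and suffices for your inequality. By the ``moreover'' clause of Lemma~\ref{path}, the tail of a length-$3$ path has eccentricity $rad(G)$ and the head does not, so no vertex is simultaneously the tail of one length-$3$ path and the head of another. Also, every $v\in V_{\ge 3}$ lies on a directed min cycle (Lemma~\ref{contain} plus Lemma~\ref{directed}), hence has at least one incoming and at least one outgoing edge in $H$; since a length-$3$ path with tail $v$ leaves $v$ along an edge $v$ buys, and one with head $v$ enters $v$ along an edge bought by the neighbour, at most $deg_H(v)-1$ of the $deg_H(v)$ maximal $2$-degree paths at $v$ can have three interior vertices. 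Summing over $V_{\ge 3}$ gives $2p_3\le\sum_{v\in V_{\ge 3}}\left(deg_H(v)-1\right)=2|E(H')|-|V_{\ge 3}|$. Writing $e=|E(H')|$ and $m=|V_{\ge 3}|$, minimizing $10p_0+9p_1+8p_2+7p_3$ subject to $p_0+p_1+p_2+p_3=e$ and $p_3\le e-\tfrac{m}{2}$ yields at least $7\left(e-\tfrac{m}{2}\right)+8\cdot\tfrac{m}{2}=7e+\tfrac{m}{2}\ge\tfrac{21}{2}m+\tfrac{1}{2}m=11m$, which is exactly your target. This per-vertex bound of $deg_H(v)-1$ is also what the paper uses, packaged as star-shaped sets around the degree-$\ge 3$ vertices rather than as the contracted multigraph $H'$.
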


\begin{proof}
	We assign every vertex $w\in H$ satisfying condition (a) in Corollary~\ref{3neighbor} to its closest vertex $v$ with $deg_H(v) \geq 3$, breaking ties arbitrarily. If $deg_H(w)\geq 3$, it is assigned to itself. Then we get a family of star-like vertex sets $\{V_i\}_{i=1}^m$, each of which is formed by a star-center vertex $v_i$ of degree at least $3$ and $|V_i|-1$ vertices of degree $2$ assigned to $v_i$. The total degree of them is $\sum_{i=1}^{m}deg_H(v_i) + 2 \sum_{i=1}^{m}(|V_i|-1)$. Note we have $|V_i|-1 \leq deg_H(v_i)$.

	Consider that the left vertices in $H$ form a vertex set $V_0$, every element of which satisfies condition (b) in Corollary~\ref{3neighbor}. The total degree of $V_0$ is $2|V_0|$. Each vertex in $V_0$ corresponds to a 2-degree path with $k=3$. Lemma~\ref{path} implies that any vertex cannot simultaneously be the start point of a 2-degree path with $k=3$ and the end point of another such path. Thus, every star-center vertex $v_i$ is linked to at most $(deg_H(v_i)-1)$ such paths because $v_i$ should be contained in some directed min cycle by Lemma~\ref{directed} and Lemma~\ref{contain}. Then we have $2|V_0|\leq \sum_{i=1}^{m}(deg_H(v_i)-1)$.

    The average degree of $H$ is
	\begin{align*}
		deg(H) &= \frac{\sum_{i=1}^{m}deg_H(v_i) + 2 \sum_{i=1}^{m}(|V_i|-1) + 2|V_0|}{\sum_{i=1}^{m}|V_i| + |V_0|} \\
		&= 2 + \frac{\sum_{i=1}^{m}deg_H(v_i)-2m}{\sum_{i=1}^{m}|V_i| + |V_0|} \\
		&\geq 2 + \frac{\sum_{i=1}^{m}deg_H(v_i)-2m}{\sum_{i=1}^{m}\left(deg_H(v_i) + 1\right) + \frac{1}{2}\sum_{i=1}^{m}
		(deg_H(v_i)-1)} \\
		&\geq 2 + \frac{3m - 2m}{\frac{3}{2}\cdot 3m + \frac{1}{2}m} \\
		&= 2 + \frac{1}{5}.
	\end{align*}
\end{proof}

\subsection{Upper Bound on Average Degree}
\label{upper_section}

This subsection proves that the average degree of any biconnected component is less than $2 + 2/\lceil (\alpha - 1)/2 \rceil$ (Lemma~\ref{upper}). We consider the shortest path tree $T$ rooted at some central vertex $v_0$. Let $T_H$ be the intersection of $H$ and $T$, i.e., $T_H = H\cap T = \left<V(H), E(H)\cap E(T)\right>$. Then the average degree of a biconnected component $H$ can be written as $$deg(H)=\frac{2(|E(T_H)| + |E(H) \backslash E(T_H)|)}{|V(T_H)|}.$$ Observe that $T_H$ is still a tree, so we have $|E(T_H)| = |V(T_H)| - 1$. Thus, to obtain the upper bound on average degree, we only need to bound the number of non-tree edges.

We call a vertex $u$ a shopping vertex if $u$ buys a non-tree edge. First observe that for $\alpha > 1$, every shopping vertex $u$ buys exactly one non-tree edge, otherwise by Lemma~\ref{crucial}, $D_G(u) \leq D_G(v_0) + 1 - \alpha < rad(G)$, a contradiction. The following Lemma~\ref{farshopping} aims to show that every pair of shopping vertices is not so close to each other on the tree, by which we can prove the number of shopping vertices has an upper bound.

\begin{lemma}
	\label{farshopping}
	For $\alpha > 2$, if $G$ is a non-tree equilibrium graph and $H$ is a biconnected component of $G$, then for every two shopping vertices $u_1, u_2\in V(H)$, $\max\left\{d_{T_H}(u_1, x), d_{T_H}(u_2, x)\right\} \geq \frac{\alpha - 1}{2}$, where $x$ is the lowest common ancestor of $u_1$ and $u_2$ in $T_H$.
\end{lemma}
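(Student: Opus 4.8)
The plan is to argue by contradiction. Suppose $d_{T_H}(u_1,x)<\frac{\alpha-1}{2}$ and $d_{T_H}(u_2,x)<\frac{\alpha-1}{2}$. Since $T_H$ is a subtree of $T$ and $x$ is the lowest common ancestor of $u_1$ and $u_2$ in $T_H$, it lies on the tree path joining them, so $d_{T_H}(u_1,u_2)=d_{T_H}(u_1,x)+d_{T_H}(x,u_2)<\alpha-1$, and in particular $d_G(u_1,u_2)<\alpha-1$. Recall from the paragraph preceding the lemma that, as $\alpha>1$, each shopping vertex $u_i$ buys exactly one non-tree edge, which I write $e_i=(u_i,w_i)$; moreover Lemma~\ref{crucial} applied with $a=u_i$, $a_1=w_i$, $b=v_0$ gives $rad(G)\le D_G(u_i)\le rad(G)+1$. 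The deviation I analyze is that $u_1$ removes $e_1$, producing a graph $G'$. Because $e_1\notin E(T)$, the tree $T$ (and with it $T_H$) remains a shortest path tree of $G'$, so distances from $v_0$ are unchanged and $d_{G'}(u_1,u_2)\le d_{T_H}(u_1,u_2)<\alpha-1$; and since dropping $e_1$ saves the edge cost $\alpha$ but must not be profitable at equilibrium, $D_{G'}(u_1)\ge D_G(u_1)+\alpha\ge rad(G)+\alpha$. The goal is to contradict this by exhibiting, for a vertex $z$ realizing $D_{G'}(u_1)$, a $u_1$--$z$ walk in $G'$ of length strictly below $rad(G)+\alpha$.

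First, if some shortest $u_1$--$z$ path in $G$ avoids $e_1$ then $d_{G'}(u_1,z)=d_G(u_1,z)\le D_G(u_1)<rad(G)+\alpha$, done; so every shortest $u_1$--$z$ path uses $e_1$, which (as $u_1$ is an endpoint of $e_1$) forces $d_G(u_1,z)=1+d_G(w_1,z)$ and $d_{G'}(w_1,z)=d_G(w_1,z)$. Next, routing $u_1$ through $u_2$ in $G'$ gives $d_{G'}(u_1,z)\le d_{G'}(u_1,u_2)+d_{G'}(u_2,z)<(\alpha-1)+d_{G'}(u_2,z)$; if $d_{G'}(u_2,z)=d_G(u_2,z)$ then $d_{G'}(u_1,z)<(\alpha-1)+D_G(u_2)\le rad(G)+\alpha$, again a contradiction, so every shortest $u_2$--$z$ path in $G$ also uses $e_1$. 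A triangle-inequality check forbids such a path from entering $e_1$ at $w_1$ -- that would give $d_G(u_2,z)=d_G(u_2,w_1)+2+d_G(w_1,z)>d_G(u_2,w_1)+d_G(w_1,z)\ge d_G(u_2,z)$ -- so it enters at $u_1$, whence $d_G(u_2,z)=d_G(u_2,u_1)+d_G(u_1,z)$ and therefore $d_G(u_1,z)\le D_G(u_2)-1\le rad(G)$. The same reasoning can be run with the common ancestor $x$ in place of $u_2$: using $d_G(u_1,x)\le d_{T_H}(u_1,x)<\frac{\alpha-1}{2}$ and $D_G(x)\le rad(G)+2$ (Lemma~\ref{D_bound}, valid as $x\in V(H)$), either some shortest $x$--$z$ path avoids $e_1$, in which case routing $u_1$ through $x$ gives $d_{G'}(u_1,z)\le d_{T_H}(u_1,x)+d_G(x,z)<\frac{\alpha-1}{2}+rad(G)+2<rad(G)+\alpha$ (for $\alpha>2$, since $d_{T_H}(u_1,x)$ is an integer strictly below $\frac{\alpha-1}{2}$), a contradiction; or every shortest $x$--$z$ path uses $e_1$, and again it must enter at $u_1$.

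Thus the one surviving configuration is that $z$ lies ``behind'' the shortcut $e_1$ simultaneously for $u_1$, $u_2$ and $x$, so that every naive detour of $u_1$ toward $z$ passes through a vertex that is itself cut off from $z$ once $e_1$ is removed; this is the step I expect to be the main obstacle. I plan to close it using the extra rigidity this configuration carries. The removal deviation itself also yields $d_T(u_1,v_0)\ge\alpha$ (route $u_1$ to anything via $v_0$ through the surviving tree), hence $d_T(u_1,z)\ge d_{G'}(u_1,z)\ge rad(G)+\alpha$: the vertex $z$ is extremely far from $u_1$ in $T$ yet within $rad(G)$ of $u_1$ in $G$, forcing $rad(G)\ge\alpha$ and pinning the ancestor $x$ deep in $T$ (so $d_T(x,v_0)>\frac{\alpha+1}{2}$ and $d_T(x,z)$ is large). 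I then expect to combine these depth estimates with the min-cycle machinery of Subsection~\ref{cycles} -- note that in the surviving case a shortest $u_2$--$w_1$ path has length exactly $d_G(u_1,u_2)+1$, which together with $e_1$ and a shortest $u_1$--$u_2$ path closes up into a walk of length at most $2d_G(u_1,u_2)+2<2\alpha$ that, by Lemma~\ref{length2} and Lemma~\ref{directed}, cannot carry a genuine cycle -- to conclude that the shortest $u_1$--$u_2$ connection itself must run through $e_1$, contradicting the fact that it survives in $G'$ with length $<\alpha-1$. The resulting contradiction establishes $\max\{d_{T_H}(u_1,x),d_{T_H}(u_2,x)\}\ge\frac{\alpha-1}{2}$.
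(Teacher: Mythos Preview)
Your surviving configuration is a genuine gap, and the sketch you give to close it does not go through. The walk $u_1\to w_1\to u_2\to u_1$ has length $2d_G(u_1,u_2)+2<2\alpha$, but Lemma~\ref{length2} only forbids cycles of length $<2\alpha-1$, so even if the walk were simple you would not immediately get a contradiction. More importantly, your intended conclusion ``the shortest $u_1$--$u_2$ connection itself must run through $e_1$'' is simply false: the $T_H$-path from $u_1$ to $u_2$ consists entirely of tree edges, hence avoids $e_1$, and has length $d_{T_H}(u_1,u_2)<\alpha-1$; so in $G'$ a short $u_1$--$u_2$ path certainly survives. There is nothing contradictory about the surviving configuration from the viewpoint of your deviation: $u_1$ may be deep in $T$ (indeed you derive $d_T(u_1,v_0)\ge\alpha$), $z$ may be close to $u_1$ only via $e_1$, and $u_2$, $x$ may both sit on $u_1$'s side of $e_1$ so that their shortest paths to $z$ also use $e_1$. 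None of the depth estimates or min-cycle facts you list rule this out.

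The structural reason your approach stalls is that the deviation ``$u_1$ deletes $e_1$'' never uses the second non-tree edge $e_2=(u_2,w_2)$. You only exploit that $u_2$ is a shopping vertex to bound $D_G(u_2)\le rad(G)+1$; the edge $e_2$ itself plays no role. But the lemma is precisely about \emph{two} shopping vertices, and a correct argument must use both shortcuts. The paper does this by considering the path $w_1,u_1,\ldots,x,\ldots,u_2,w_2$ in $H$ (the $T_H$-path extended at both ends by the non-tree edges). Since $u_1$ buys toward $w_1$ and $u_2$ buys toward $w_2$, some interior vertex $x_i$ on this path buys \emph{both} of its incident path-edges. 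That vertex deviates by deleting both and buying a single edge to $v_0$, saving $\alpha$; and because the short-cycle lemma forces $w_1$ and $w_2$ to lie outside the subtrees hanging from the path (otherwise a cycle of length $<\alpha+2$ appears), the broken subtrees can be re-reached from $v_0$ via $w_1$ or $w_2$, with $D_{G'}(v_0)<D_G(v_0)+\alpha-1$. This is the missing idea: use both non-tree edges to guarantee alternative routes after the deletion, and perform the deviation at the unique ``source'' vertex along the path rather than at $u_1$.
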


\begin{proof}
	Let $(u_1, v_1)$ and $(u_2, v_2)$ be the non-tree edge bought by $u_1$ and $u_2$. Consider the path $P$ between $u_1$ and $u_2$ in $T_H$ and we extend it with $(u_1, v_1)$ and $(u_2, v_2)$. Then we get a path $x_0 = v_1, x_1 = u_1, x_2, \ldots, x_j, \ldots, x_{k-1}, x_k = u_2, x_{k+1} = v_2$, where $x_j$ is the lowest common ancestor of $u_1$ and $u_2$ in $T_H$, $j\in \{1, 2, \ldots, k\}$.

	Since $x_1$ buys $(x_1, x_0)$ and $x_k$ buys $(x_k, x_{k+1})$, there has to be a vertex $x_i$, $1\leq i\leq k$, such that $x_i$ buys both $(x_i, x_{i-1})$ and $(x_i, x_{i+1})$. Consider the strategy change in which $x_i$ removes both the edges $(x_i, x_{i-1})$ and $(x_i, x_{i+1})$ and buys the edge $(x_i, v_0)$ and let $G'$ be the graph after the strategy change. This change decreases the creation cost by $\alpha$ and increases the usage cost from $D_{G}(x_i)$ ($\geq D_{G}(v_0)$) to at most $D_{G'}(v_0) + 1$. In the rest of proof, we are going to show that if $\max\left\{d_{T_H}(u_1, x_j), d_{T_H}(u_2, x_j)\right\} < \frac{\alpha - 1}{2}$, then we have $D_{G'}(v_0) - D_{G}(v_0) + 1 < \alpha$, so that $x_i$ has the incentive to apply this strategy, which means $G$ cannot be an equilibrium graph.

	Suppose $d_{T_H}(u_1, x_j) < \frac{\alpha - 1}{2}$ and $d_{T_H}(u_2, x_j) < \frac{\alpha - 1}{2}$. First, we observe that $v_1$ and $v_2$ are not the descendants of any vertex $x_t$ for $t=1, 2, \ldots, k$. If $v_1$ is the descendant of $x_t$ for some $t$, $u_1, v_1$ and $x_t$ forms a cycle. Note that $v_1$ is at most one level deeper than $u_1$ in $T$, so the length of the cycle is at most $d_{T_H}(u_1, x_t) + d_{T_H}(v_1, x_t) + 1 \leq 2(d_{T_H}(u_1, x_j) + 1) < \alpha + 1$, contradicted with Lemma~\ref{length1} (or Lemma~\ref{length2}). The same goes for $v_2$.

	Second, we observe that only the vertices in $P$ and their descendants in $T$ may have increased distance to $v_0$ by this strategy change of $x_i$. By the previous analysis we know $v_0$-$v_1$-path and $v_0$-$v_2$ path are not affected, so $v_0$ can reach these vertices by taking a detour through $v_1$ or $v_2$. Let $y$ be one of the vertices in $P$ or a descendant of any vertex in $P$ and denote the nearest vertex to $y$ in $P$ by $x_{t^*}$, i.e. $x_{t^*} = \arg \min_{x_t} d_{T_H}(y, x_t)$. W.l.o.g., we assume $1\leq i\leq j$. 
	\begin{itemize}
		\item $1\leq i < j$: If $i < t^*\leq k$, since $v_0$-$x_j$-$x_{t^*}$-path is not affected, $d_{G'}(v_0, y) \leq d_{G}(v_0, y)$. If $t^* = i$, the edge $(x_i, v_0)$ makes the $v_0$-$x_i$-$y$-path even shorter. If $1 \leq t^* < i$, $v_0$ can reach $y$ by taking a detour through $v_1(= x_0)$ (see Figure~\ref{2remove}).
		
		\begin{equation}
			\label{detour}
			\begin{aligned}
				d_{G'}(v_0, y) \leq& d_{G'}(v_0, x_0) + d_{G'}(x_0, x_{t^*}) + d_{G'}(x_{t^*}, y) \\
				= & d_{G}(v_0, x_0) + d_{G}(x_0, x_{t^*}) + d_{G}(x_{t^*}, y) \\
				\leq & (d_{G}(v_0, u_1) + 1) + (1 + d_{G}(u_1, x_{t^*})) + d_{G}(v_0, y) - d_{G}(v_0, x_{t^*})\\
				= & d_{G}(v_0, y) + 2d_{G}(u_1, x_{t^*}) + 2 \\
				\leq & d_{G}(v_0, y) + 2d_{G}(u_1, x_j)\\
				< & d_{G}(v_0, y) + \alpha - 1.\\
			\end{aligned}
		\end{equation}
		\begin{figure}[h]
		    \centering
		    \begin{tikzpicture}
                \node[circle, draw, inner sep=0cm, minimum size=.8cm] (xi) at (0, 0) {$x_i$};
                \node[circle, draw, inner sep=0cm, minimum size=.8cm] (xi+1) at (0.6, 1.2) {$x_{i+1}$};
                \node[circle, draw, inner sep=0cm, minimum size=.8cm] (xi-1) at (-0.6, -1.2) {$x_{i-1}$};
                \node[circle, draw, inner sep=0cm, minimum size=.8cm] (xj) at (1.5, 2.4) {$x_j$};
                \node[circle, draw, inner sep=0cm, minimum size=.8cm] (xt) at (-1.8, -1.8) {$x_{t^*}$};
                \node[circle, draw, inner sep=0cm, minimum size=.8cm] (v0) at (1.5, 4.8) {$v_0$};
                \node[circle, draw, inner sep=0cm, minimum size=.8cm] (u1) at (-3, -2.4) {$u_1$};
                \node[circle, draw, inner sep=0cm, minimum size=.8cm] (v1) at (-4.5, -2.4) {$v_1$};
                \node[circle, draw, inner sep=0cm, minimum size=.8cm] (u2) at (2.5, 0) {$u_2$};
                \node[circle, draw, inner sep=0cm, minimum size=.8cm] (v2) at (4, 0) {$v_2$};
                \node[circle, draw, inner sep=0cm, minimum size=.8cm] (y) at (-0.6, -4) {$y$};
                \draw[dashed] (v0) to (xj);
                \draw[dashed] (xj) to (xi+1);
                \draw[dashed] (xj) to (u2);
                \draw[->] (u2) to (v2);
                \draw[<-] (xi+1) to (xi);
                \draw[->] (xi) to (xi-1);
                \draw[dashed] (xi-1) to (xt);
                \draw[dashed] (xt) to (y);
                \draw[dashed] (xt) to (u1);
                \draw[->] (u1) to (v1);
                \draw[dashed] (v1) to [bend left=30] (v0);
                \draw[dashed] (v2) to [bend right=30] (v0);
                \draw[->, very thick] (xi) to [bend left=30] (v0);
                \draw[thick] (-0.42, -0.42)--+(0.25, -0.35);
                \draw[thick] (-0.12, -0.48)--+(-0.35, -0.25);
                \draw[thick] (0.42, 0.42)--+(-0.25, 0.35);
                \draw[thick] (0.12, 0.48)--+(0.35, 0.25);
		    \end{tikzpicture}
		    \caption{Here we illustrate the case when $1\leq t^* < i < j$. In the new graph $G'$, $v_0$ can reach $y$ through $v_1$, $u_1$ and $x_{t^*}$. Note that $v_1$ is at most one level deeper than $u_1$. From $v_0 \to x_j \to x_{t^*} \to y$ to $v_0 \to v_1 \to x_{t^*} \to y$, the distance is increased by up to $2d_G(v_1, x_{t^*}) \leq 2d_G(u_1, x_j) < \alpha - 1$.}
			\label{2remove}
		\end{figure}
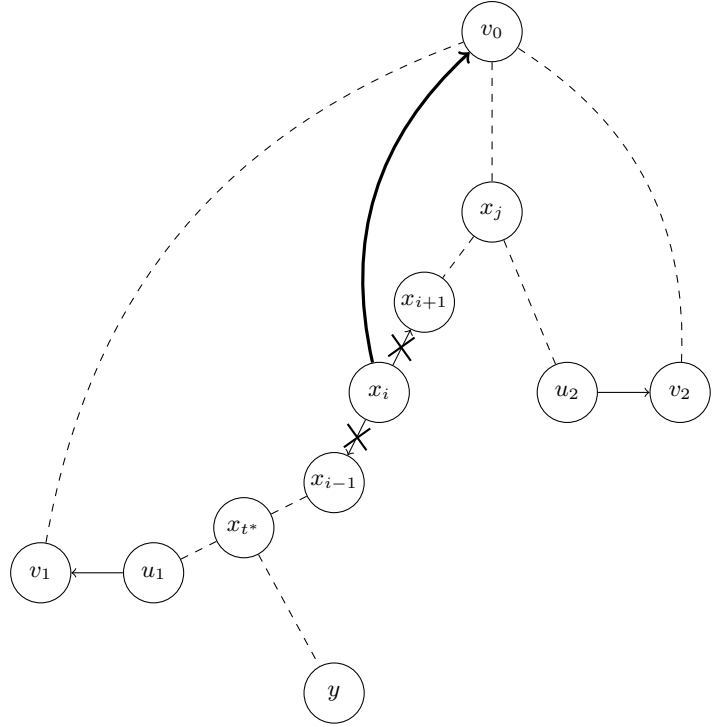
		\item $i = j$: If $t^* = i$, the edge $(x_i, v_0)$ makes the $v_0$-$x_i$-$y$-path even shorter. If $t^* \neq i$, $v_0$ can reach $y$ by taking a detour through $v_1$ or $v_2$. In either case we can get $d_{G'}(v_0, y) < d_{G}(v_0, y) + \alpha - 1$ by the inequality similar to Equation (\ref{detour}).
	\end{itemize}
	Combining them together we have $D_{G'}(v_0) < D_{G}(v_0) + \alpha - 1$, which finishes the proof.
\end{proof}

\begin{corollary}
	For $\alpha > 2$, if $G$ is a non-tree equilibrium graph and $H$ is a biconnected component of $G$, then for every two shopping vertices $u_1, u_2\in V(H)$, $d_{T_H}(u_1, u_2) \geq \frac{\alpha - 1}{2}$.
\end{corollary}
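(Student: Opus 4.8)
The plan is to derive this corollary as an immediate consequence of Lemma~\ref{farshopping}. The point is that in a tree, the path between any two vertices passes through their lowest common ancestor, so the tree-distance decomposes additively along the two branches.

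Concretely, I would let $x$ denote the lowest common ancestor of $u_1$ and $u_2$ in $T_H$. Since $T_H$ is a tree and $x$ lies on the unique $u_1$-$u_2$-path in $T_H$, we have
\[
d_{T_H}(u_1, u_2) = d_{T_H}(u_1, x) + d_{T_H}(u_2, x) \geq \max\left\{ d_{T_H}(u_1, x),\, d_{T_H}(u_2, x) \right\}.
\]
Applying Lemma~\ref{farshopping} to the two shopping vertices $u_1, u_2$ with this common ancestor $x$ gives $\max\{ d_{T_H}(u_1, x), d_{T_H}(u_2, x)\} \geq \frac{\alpha - 1}{2}$, and chaining the two inequalities yields $d_{T_H}(u_1, u_2) \geq \frac{\alpha - 1}{2}$, as desired.

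There is essentially no obstacle here: the only thing to be slightly careful about is the edge case where one of $u_1, u_2$ is an ancestor of the other, in which case $x$ coincides with that vertex and one of the two branch lengths is zero; the additive decomposition and hence the argument still go through unchanged, since the maximum of the two branch lengths is then simply $d_{T_H}(u_1, u_2)$ itself, and Lemma~\ref{farshopping} still applies verbatim.
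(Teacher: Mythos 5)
Your argument is correct and is exactly the intended one: the paper states this corollary without proof as an immediate consequence of Lemma~\ref{farshopping}, relying on the same additive decomposition $d_{T_H}(u_1,u_2)=d_{T_H}(u_1,x)+d_{T_H}(u_2,x)\geq\max\{d_{T_H}(u_1,x),d_{T_H}(u_2,x)\}$ through the lowest common ancestor. Your handling of the degenerate case where one vertex is an ancestor of the other is a fine extra touch but changes nothing.
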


\begin{lemma}
	\label{upper}
	For $\alpha > 2$, if $G$ is a non-tree equilibrium graph, then for every biconnected component $H$ of $G$, $deg(H) < 2 + 2/\lceil (\alpha - 1)/2 \rceil$.
\end{lemma}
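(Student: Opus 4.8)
Since $|E(T_H)| = |V(T_H)|-1$, writing $N := |E(H)\setminus E(T_H)|$ for the number of non-tree edges of $H$ we have
$$deg(H) \;=\; \frac{2\bigl(|V(T_H)|-1+N\bigr)}{|V(T_H)|} \;=\; 2 + \frac{2(N-1)}{|V(T_H)|}.$$
So, setting $q := \lceil (\alpha-1)/2\rceil$ (note $q\ge 1$ since $\alpha>2$), the plan is to show $|V(T_H)| \ge (N-1)q + 1$: this gives $\frac{2(N-1)}{|V(T_H)|} < \frac{2}{q}$ when $N\ge 2$, while for $N\le 1$ the term $\frac{2(N-1)}{|V(T_H)|}$ is nonpositive and already $deg(H)\le 2 < 2+2/q$.

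\textbf{Setup on $T_H$.} Every non-tree edge of $H$ is paid by exactly one of its endpoints, which lies in $V(H)$; and, as already observed before this lemma, for $\alpha>1$ a shopping vertex buys exactly one non-tree edge, so distinct non-tree edges have distinct buyers. Hence $H$ contains exactly $N$ pairwise distinct shopping vertices $u_1,\dots,u_N$. I would root $T_H$ at its vertex nearest to $v_0$ (the rooting inherited from $T$), and for each $i$ let $P_i$ be the set consisting of $u_i$ together with its first $q-1$ ancestors in $T_H$ (or all of its ancestors, if there are fewer than $q-1$ of them). Thus $P_i$ lies on the path from $u_i$ to the root, $d_{T_H}(u_i,v)\le q-1$ for all $v\in P_i$, and $1\le |P_i|\le q$, with $|P_i|<q$ only if $P_i$ contains the root of $T_H$.

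\textbf{The crux: disjointness of the $P_i$.} I claim $P_1,\dots,P_N$ are pairwise disjoint. Suppose $v\in P_i\cap P_j$ for some $i\ne j$. Then $v$ lies on both the $u_i$-to-root path and the $u_j$-to-root path, so it is a common ancestor of $u_i$ and $u_j$, with $d_{T_H}(u_i,v)\le q-1$ and $d_{T_H}(u_j,v)\le q-1$; since every common ancestor of $u_i$ and $u_j$ is an ancestor of (or equal to) their lowest common ancestor $x$ in $T_H$, we get $d_{T_H}(u_i,x)\le q-1$ and $d_{T_H}(u_j,x)\le q-1$. But these distances are integers, so Lemma~\ref{farshopping} forces $\max\{d_{T_H}(u_i,x),d_{T_H}(u_j,x)\}\ge \lceil(\alpha-1)/2\rceil = q$, a contradiction. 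This is the one delicate step, and it is exactly the place where the full lowest-common-ancestor form of Lemma~\ref{farshopping} is needed — the weaker pairwise-distance corollary does not suffice here, as a star already shows. Granting the claim, a set $P_i$ with $|P_i|<q$ must contain the root of $T_H$, so by disjointness at most one $P_i$ is short; hence $|V(T_H)| \ge \sum_{i=1}^N |P_i| \ge (N-1)q + 1$, and substituting into the displayed formula for $deg(H)$ yields $deg(H) < 2 + 2/\lceil(\alpha-1)/2\rceil$, as required.
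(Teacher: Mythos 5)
Your proof is correct and follows essentially the same route as the paper: both arguments assign to each shopping vertex a window of $\lceil(\alpha-1)/2\rceil$ ancestors in $T_H$ and use the lowest-common-ancestor form of Lemma~\ref{farshopping} to show these windows are pairwise disjoint, which bounds the number of non-tree edges. Your version is in fact slightly more careful than the paper's at two points it glosses over — the shopping vertices too close to the root of $T_H$ to have a full window, and the final arithmetic giving strict inequality.
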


\begin{proof}
	For each shopping vertex $u$, we define such a vertex set $A(u)$ that for every vertex $w \in A(u)$, $w=u$ or $w$ is an ancestor of $u$ in $T_H$, and $d_{T_H}(u, w) < \frac{\alpha - 1}{2}$. $|A(u)| = \lceil \frac{\alpha - 1}{2} \rceil $ for every shopping vertex $u$.
	
	If for every pair of shopping vertex $u_1, u_2$, $A(u_1)\cap A(u_2) = \emptyset$, the number of shopping vertices could be bounded by $|V(H)|/\lceil (\alpha - 1)/2  \rceil$, then the desired result can be obtained by $$deg(H)=\frac{2(|E(T_H)| + |E(H) \backslash E(T_H)|)}{|V(T_H)|} < 2 + \frac{2}{\lceil (\alpha - 1)/2 \rceil}.$$

	Suppose for contradiction there are two shopping vertices $u_1, u_2$, such that $A(u_1)\cap A(u_2) \neq \emptyset$. Denote the lowest common ancestor of $u_1$ and $u_2$ by $x$. By the definition of $A$, we must have $x\in A(u_1)\cap A(u_2)$. However by Lemma~\ref{farshopping}, $\max\left\{d_{T_H}(u_1, x), d_{T_H}(u_2, x)\right\} \geq \frac{\alpha - 1}{2}$, which means $x$ cannot belong to $A(u_1)$ and $A(u_2)$ simultaneously. This contradiction completes the proof.
\end{proof}

\subsection{Tree Nash Equilibria and Price of Anarchy}
\label{poa_section}

Combining the lemmas from the above two subsections, we are ready to give our main result on tree Nash equilibria in Theorem~\ref{main}. Then we will give the upper bound on the price of anarchy for tree equilibria in Theorem~\ref{poa}. 

\begin{theorem}
	\label{main}
	For $\alpha > 19$, every equilibrium graph is a tree.
\end{theorem}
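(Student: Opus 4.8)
The plan is to derive a contradiction directly from the two bounds on the average degree of a biconnected component established in Subsections~\ref{lower_section} and~\ref{upper_section}. Suppose, for contradiction, that $G$ is a non-tree equilibrium graph with $\alpha > 19$. Under the (slightly non-standard) convention adopted in this paper, where the trivial two-vertex one-edge subgraph is not counted as a biconnected component, a connected non-tree graph must contain at least one biconnected component $H$; otherwise $G$ would be acyclic and hence a tree. Fix such an $H$ and aim to pin $deg(H)$ between two incompatible estimates.

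Since $\alpha > 19 > 5$, Lemma~\ref{lower} applies and yields $deg(H) \geq 2 + \tfrac{1}{5}$. Since also $\alpha > 19 > 2$, Lemma~\ref{upper} applies and yields $deg(H) < 2 + 2/\lceil (\alpha-1)/2 \rceil$. It then remains only to verify that these are incompatible in the stated range. Indeed, $\alpha > 19$ gives $(\alpha - 1)/2 > 9$, hence $\lceil (\alpha-1)/2 \rceil \geq 10$, and therefore $2/\lceil (\alpha-1)/2 \rceil \leq \tfrac{1}{5}$. Chaining the inequalities, $2 + \tfrac{1}{5} \leq deg(H) < 2 + 2/\lceil (\alpha-1)/2 \rceil \leq 2 + \tfrac{1}{5}$, which is absurd. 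Hence no non-tree equilibrium graph exists for $\alpha > 19$, i.e., every equilibrium graph is a tree.

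Essentially all the difficulty has already been absorbed into the preceding lemmas: the lower bound on $deg(H)$ relies on the structural fact that every $2$-degree path of $H$ has length at most $3$ (Lemma~\ref{path}), proved by a careful edge-swap argument that partitions the vertex set into $S(x_j)\cup S(x_{j+1})\cup S(x_{j+2})$, $S_{left}$, and $S_{right}$; the upper bound relies on the fact that distinct shopping vertices are far apart in $T_H$ (Lemma~\ref{farshopping}), itself proved by a two-edge-removal deviation. Given those, the only genuine care needed here is arithmetic: confirming that the ceiling in Lemma~\ref{upper} crosses the $1/5$ threshold exactly at $\alpha > 19$, and invoking the biconnected-component convention so that ruling out such components really does force a tree. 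It is worth noting that the binding constant is the weaker one, namely the $1/5$ of Lemma~\ref{lower}, so any strengthening of the lower bound on $deg(H)$ would immediately push the threshold below $19$.
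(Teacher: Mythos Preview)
Your proof is correct and follows exactly the same route as the paper: assume a non-tree equilibrium, take a biconnected component $H$, and derive a contradiction by sandwiching $deg(H)$ between the lower bound $2+\tfrac{1}{5}$ from Lemma~\ref{lower} and the upper bound $2+2/\lceil(\alpha-1)/2\rceil \leq 2+\tfrac{1}{5}$ from Lemma~\ref{upper} once $\alpha>19$. The paper's own proof is terser but identical in substance; your added remarks on the biconnected-component convention and the arithmetic verification are helpful elaborations, not deviations.
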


\begin{proof}
    $\alpha > 19 \Longrightarrow \left\lceil \frac{\alpha - 1}{2}\right\rceil \geq 10$. By Lemma~\ref{upper}, if $G$ is a non-tree equilibrium graph, for every biconnected component $H$ of $G$, we have $deg(H) < 2 + \frac{1}{5}$, contradicted with Lemma~\ref{lower}.
\end{proof}
    

\begin{lemma}[\cite{demaine2007price}]
	\label{optimum}
	For $\alpha \leq \frac{2}{n-2}$, the complete graph is a social optimum. For $\alpha \geq \frac{2}{n-2}$, the star is a social optimum.
\end{lemma}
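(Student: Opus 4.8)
The plan is to reduce the statement to a single structural inequality about connected graphs and then prove that inequality by counting vertices of eccentricity $1$. Throughout assume $n \ge 3$, so that $\frac{2}{n-2}$ is meaningful. First I would record the costs of the two candidate graphs: the star has $n-1$ edges and eccentricity sum $1 + 2(n-1) = 2n-1$, so its cost is $\alpha(n-1) + (2n-1)$; the complete graph $K_n$ has $\binom{n}{2}$ edges and eccentricity sum $n$, so its cost is $\alpha\binom{n}{2} + n$. A direct subtraction gives $\text{Cost}(K_n) - \text{Cost}(\text{star}) = (n-1)\bigl(\alpha\tfrac{n-2}{2} - 1\bigr)$, which is $\le 0$ exactly when $\alpha \le \frac{2}{n-2}$; in particular the two costs coincide at $\alpha = \beta := \frac{2}{n-2}$, $K_n$ is the cheaper candidate for $\alpha \le \beta$, and the star is the cheaper one for $\alpha \ge \beta$.

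The key reduction is that it suffices to prove, for every connected graph $G$ on $n$ vertices with $m := |E(G)|$ and $S := \sum_{v} D_G(v)$, the single inequality
\[
\beta m + S \;\ge\; \beta(n-1) + (2n-1),
\]
i.e. that the cost of $G$ at the threshold $\alpha = \beta$ is at least the common cost of the star and $K_n$ there. Granting this, fix any connected $G$ and any $\alpha > 0$ and write $\alpha m + S = (\alpha - \beta)m + (\beta m + S)$. When $\alpha \ge \beta$, use $m \ge n-1$ to get $\alpha m + S \ge (\alpha-\beta)(n-1) + \beta(n-1) + (2n-1) = \alpha(n-1) + (2n-1) = \text{Cost}(\text{star})$; when $\alpha \le \beta$, use $m \le \binom{n}{2}$ (and $\alpha - \beta \le 0$) together with the identity $\beta(n-1)+(2n-1) = \beta\binom{n}{2} + n$ to get $\alpha m + S \ge \alpha\binom{n}{2} + n = \text{Cost}(K_n)$. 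Combined with the threshold computation above, this is exactly the lemma.

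It remains to prove the displayed inequality, which after multiplying by $\frac{n-2}{2}$ and clearing is equivalent to $2m + (n-2)S \ge n(2n-3)$. Let $u$ be the number of vertices of $G$ of eccentricity $1$; these are precisely the vertices adjacent to all others. If $u = 0$, then every vertex has eccentricity $\ge 2$, so $S \ge 2n$, and with $m \ge n-1$ the inequality follows for $n \ge 2$. If $u \ge 1$, the $u$ universal vertices form a clique and each is joined to all $n-u$ remaining vertices, so $m \ge \binom{u}{2} + u(n-u) = \frac{u(2n-u-1)}{2}$, hence $2m \ge u(2n-u-1)$; moreover $S \ge u\cdot 1 + (n-u)\cdot 2 = 2n - u$, hence $(n-2)S \ge (n-2)(2n-u)$. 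Adding, $2m + (n-2)S \ge u(2n-u-1) + (n-2)(2n-u) = 2n^2 - 4n + u(n+1-u)$, so the inequality reduces to $u(n+1-u) \ge n$. The map $u \mapsto u(n+1-u)$ is concave and equals $n$ at both $u=1$ and $u=n$, so it is $\ge n$ on $[1,n]$, which finishes the proof; note equality holds for the star ($u=1$) and for $K_n$ ($u=n$), confirming optimality. The main obstacle is the last step: the right move is to bound both $m$ and $S$ by functions of the single parameter $u$ — with the clique-plus-join lower bound on $m$ — so that the algebra collapses to a one-line concavity estimate, rather than attempting separate ad hoc arguments for $\alpha$ above and below the threshold, where the edge count needed to beat the star grows with $u$ in a way that resists a uniform bound.
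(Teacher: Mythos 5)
Your proof is correct. The paper itself gives no argument for this lemma --- it is imported verbatim from Demaine et al. --- so there is nothing to compare against, but your write-up is a valid self-contained derivation. The reduction to the single threshold inequality $\beta m + S \ge \beta(n-1) + (2n-1)$ at $\alpha = \beta = \frac{2}{n-2}$ is sound (the interpolation argument using $n-1 \le m \le \binom{n}{2}$ and the sign of $\alpha - \beta$ is exactly right), and the proof of that inequality checks out: the equivalent form $2m + (n-2)S \ge n(2n-3)$, the bounds $2m \ge u(2n-u-1)$ and $S \ge 2n-u$ in terms of the number $u$ of universal vertices, and the final concavity estimate $u(n+1-u) \ge n$ for $1 \le u \le n$ (with equality precisely at the star and at $K_n$) are all correct. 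The only implicit step worth making explicit is that the social optimum ranges over connected graphs only, since a disconnected graph has infinite cost in this model.
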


\begin{lemma}[\cite{demaine2007price}]
	\label{smallalpha}
	For $\alpha < \frac{1}{n-2}$, the price of anarchy is 1. For $\alpha < \frac{2}{n-2}$, the price of anarchy is at most 2.
\end{lemma}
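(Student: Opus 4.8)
The plan is to pin down the social optimum with Lemma~\ref{optimum}, bound the cost of an arbitrary Nash equilibrium via a single ``buy all missing edges'' deviation, and then compare the two costs. Throughout we may assume $n\ge 3$, since otherwise the thresholds $\tfrac{1}{n-2}$ and $\tfrac{2}{n-2}$ are non-positive and there is nothing to prove.

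Since $\alpha<\tfrac{2}{n-2}$ in both parts, Lemma~\ref{optimum} identifies the social optimum as the complete graph $K_n$. Realizing $K_n$ with each edge paid by exactly one endpoint gives $\text{Cost}(K_n)=\alpha\binom{n}{2}+n$, because every vertex has eccentricity $1$. This fixes the denominator of the PoA, so it remains only to upper bound $\text{Cost}(G)$ over all Nash equilibria $G$.

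The key step is an eccentricity bound: in every Nash equilibrium $G$, each vertex $v$ satisfies $D_G(v)\le 1+\alpha(n-2)$. For this, consider the deviation in which $v$ additionally buys an edge to every vertex to which it is not already adjacent. Since $G$ is connected and $n\ge 2$, $v$ already has a neighbor, so it buys at most $n-2$ new edges; in the resulting graph $v$ is adjacent to all other vertices and thus has eccentricity $1$. The equilibrium condition for $v$ then forces $D_G(v)-1\le\alpha(n-2)$. Now split into cases. If $\alpha<\tfrac{1}{n-2}$, this gives $D_G(v)<2$, and since $D_G(v)$ is a positive integer we get $D_G(v)=1$ for every $v$, i.e. $G=K_n$, which is a social optimum; as $\text{PoA}\ge 1$ always, $\text{PoA}=1$ (and one checks directly that $K_n$ is itself an equilibrium here, so at least one equilibrium exists: no vertex profits by deleting any subset of the edges it pays for, since that can raise its eccentricity only to $2$ while it keeps at most $n-2$ bought edges and $\alpha(n-2)<1$). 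If instead $\alpha<\tfrac{2}{n-2}$, the bound gives $D_G(v)<3$, hence $D_G(v)\le 2$ and $\sum_{v}D_G(v)\le 2n$; together with $|E(G)|\le\binom{n}{2}$ this yields $\text{Cost}(G)\le\alpha\binom{n}{2}+2n=\text{Cost}(K_n)+n\le 2\,\text{Cost}(K_n)$, using $\text{Cost}(K_n)\ge n$. Therefore $\text{PoA}\le 2$.

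The only substantive ingredient is the eccentricity bound, and it reduces to a one-line deviation; the remainder is bookkeeping — confirming the number of added edges is at most $n-2$ (which uses connectedness), verifying $K_n$ is an equilibrium in the first regime so the ratio $1$ is actually attained, and discarding the degenerate small-$n$ cases. I do not anticipate any genuine difficulty beyond these routine checks.
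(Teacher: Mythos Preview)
The paper does not supply a proof of Lemma~\ref{smallalpha}; it is quoted directly from \cite{demaine2007price}. Your argument is correct and is essentially the standard one: bound every eccentricity via the single deviation ``buy all missing edges'' (which costs at most $\alpha(n-2)$ and drops the eccentricity to $1$), then read off $D_G(v)=1$ for $\alpha<\tfrac{1}{n-2}$ and $D_G(v)\le 2$ for $\alpha<\tfrac{2}{n-2}$, and compare against $\text{Cost}(K_n)=\alpha\binom{n}{2}+n$ using Lemma~\ref{optimum}.

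One small wording issue in your equilibrium check for $K_n$: the phrase ``it keeps at most $n-2$ bought edges'' is not what you need. What matters is that a vertex can \emph{delete} at most $n-2$ of its bought edges while remaining connected (if it deletes everything it becomes isolated and pays infinite cost), so the saving is at most $\alpha(n-2)<1$ while the eccentricity rises by exactly $1$. With that phrasing fixed, the verification goes through for any realization of $K_n$ in which each edge is paid by exactly one endpoint, so no particular realization needs to be specified.
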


\begin{theorem}
	\label{poa}
	If all Nash equilibria are trees, the price of anarchy is less than $3$.
\end{theorem}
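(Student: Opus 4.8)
The plan is to compare an arbitrary equilibrium graph $G$ — which is a tree by hypothesis — with the star, exploiting the fact that a tree on $n$ vertices has exactly $n-1$ edges. First I would dispose of the small‑$\alpha$ range: if $\alpha<\tfrac{2}{n-2}$ then Lemma~\ref{smallalpha} already gives $\text{PoA}\le 2<3$, and for $n\le 2$ the claim is trivial. So from now on assume $n\ge 3$ and $\alpha\ge\tfrac{2}{n-2}$; then by Lemma~\ref{optimum} the star is a social optimum, so $\text{Cost}(G_{s^*})=\alpha(n-1)+(2n-1)$. The key structural remark is that $G$, being a tree, has exactly $n-1$ edges, so its creation cost $\alpha(n-1)$ coincides with that of the optimal star. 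Hence $\text{Cost}(G)=\alpha(n-1)+\sum_{v}D_G(v)$, and the inequality $\text{Cost}(G)<3\,\text{Cost}(G_{s^*})$ is equivalent to the purely metric statement $\sum_v D_G(v)<2\alpha(n-1)+6n-3$.

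The heart of the argument is a bound on the diameter: every tree equilibrium satisfies $diam(G)\le 2\alpha+3$. To see this, let $v^{*}$ attain $D_G(v^{*})=diam(G)$ and let $c$ be a central vertex, so $D_G(c)=rad(G)$; note $v^{*}\ne c$ since $rad(G)<diam(G)$ for $n\ge 3$. Consider the deviation in which $v^{*}$ additionally buys the edge $(v^{*},c)$ (if this edge is already present, the conclusion below is immediate since then $D_G(v^{*})\le 1+rad(G)$). In the resulting graph $G'$ every distance from $c$ can only shrink, so $D_{G'}(v^{*})\le 1+D_{G'}(c)\le 1+rad(G)$, while $v^{*}$'s creation cost rises by exactly $\alpha$. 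Since $v^{*}$ has no profitable deviation, $D_G(v^{*})\le \alpha+1+rad(G)$. Combining this with the standard tree identity $rad(G)=\lceil diam(G)/2\rceil$ and solving yields $diam(G)\le 2\alpha+3$. This "one extra edge to a center'' move is precisely where the max‑distance objective is essential: a single hub neighbour caps one's eccentricity at $rad(G)+1$, whereas in the sum‑distance game it would barely help.

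Finally I would assemble the pieces. Set $D:=diam(G)$, so $\sum_v D_G(v)\le nD$, and rewrite the diameter bound as $2\alpha\ge D-3$, whence $2\alpha(n-1)\ge (D-3)(n-1)$. Then
\[
3\,\text{Cost}(G_{s^*})-\text{Cost}(G)=2\alpha(n-1)+6n-3-\sum_v D_G(v),
\]
and the right‑hand side is at least $(D-3)(n-1)+6n-3-nD=3n-D$, which is strictly positive because $D=diam(G)\le n-1$. Thus $\text{Cost}(G)<3\,\text{Cost}(G_{s^*})$ for every equilibrium $G$, so the price of anarchy is strictly below $3$. The step needing care is making this estimate uniform in $\alpha$: a naive combination of $diam(G)\le 2\alpha+3$ with $\sum_v D_G(v)\le n\,diam(G)$ only yields the bound for $\alpha<\tfrac{3n-3}{2}$, because both sides then carry a $\Theta(\alpha n)$ term. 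The trick — and the only delicate point — is to keep $D$ symbolic and feed $2\alpha\ge D-3$ into the $2\alpha(n-1)$ term on the optimum side, so that the two $\Theta(\alpha n)$ contributions cancel and the clean positive remainder $3n-D$ survives for all admissible $\alpha$.
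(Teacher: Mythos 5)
Your proof is correct. It shares with the paper the decisive step: the diameter bound $diam(G)\le 2\alpha+3$, obtained from the non-profitability of a peripheral vertex buying one extra edge to a central vertex, combined with the tree identity $rad(G)=\lceil diam(G)/2\rceil$. Where you diverge is the final accounting. The paper refines $\sum_v D_G(v)$ via the layer sizes of the tree rooted at the center (using $|N_k^=(v)|\ge 2$ for every intermediate level $k$), which leads to $\text{PoA}<1+\frac{diam(G)-1}{\alpha+2}<3$. You instead use only the crude estimate $\sum_v D_G(v)\le n\cdot diam(G)$ and make it work by substituting $2\alpha\ge diam(G)-3$ into the $2\alpha(n-1)$ surplus on the optimum side, so that the two $\Theta(\alpha n)$ contributions cancel and the remainder $3n-diam(G)\ge 2n+1>0$ survives because $diam(G)\le n-1$ for any connected $n$-vertex graph; as you correctly note, a naive substitution of the diameter bound into $n\cdot diam(G)$ would fail for $\alpha\ge\frac{3n-3}{2}$. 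Your endgame is more elementary (no layer counting), while the paper's layer argument yields the slightly more informative intermediate bound $1+\frac{2(rad(G)-1)}{\alpha+2}$; for the stated constant $3$ the two routes are equally effective.
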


\begin{proof}
	The claim is trivial when $n \leq 2$, or when $\alpha < \frac{2}{n-2}$ by Lemma~\ref{smallalpha}. We therefore assume $n \geq 3$, and $\alpha \geq \frac{2}{n-2}$. Let $G$ be a tree equilibrium graph and $T$ be the shortest path tree rooted at some central vertex $v$ such that $D_G(v) = rad(G) = \lceil \frac{diam(G)}{2} \rceil$ (observe we can find such a central vertex in a tree graph). Let $u$ be a deepest leaf vertex of $T$ such that $d_G(u, v) = rad(G)$ and $D_G(u) = diam(G)$. Consider the strategy change in which $u$ additionally buys the edge $(u, v)$ and $G'$ is the graph after the change. Then $D_{G'}(u) \leq D_{G'}(v) + 1 \leq D_G(v) + 1$. The usage cost decreases by at least $D_{G}(u) - D_{G'}(u) \geq \lfloor \frac{diam(G)}{2} \rfloor -1$ and therefore $\alpha > \lfloor \frac{diam(G)}{2} \rfloor -1 \Longrightarrow diam(G) \leq 2\alpha + 3$.

	We define $N_k^=(v) = \{u\in G \mid d_G(v, u) = k\}$ and the total cost is 
	\begin{equation*}
		\begin{aligned}
			\text{Cost}(T) & = (n-1)\alpha + \sum_{u\in V} D_G(u) \\
			& \leq (n-1)\alpha + \sum_{i=0}^{rad(G)} |N_i^=(v)|(rad(G)+i)\\
			& = (n-1)\alpha + n\cdot rad(G) + \sum_{i=1}^{rad(G)} |N_i^=(v)|\cdot i.\\
		\end{aligned}
	\end{equation*}

	Observe that $|N_k^=(v)| \geq 2$ for $k=1, 2, \ldots, rad(G)-1$. To get the upper bound on $\text{Cost}(T)$, we can consider the worst case where $|N_k^=(v)| = 2$ for $k=1, 2, \ldots, rad(G)-1$ and $N_{rad(G)}^=(v) = V\backslash \bigcup_{k=0}^{rad(G)-1}N_k^=(v)$. By Lemma~\ref{optimum}, a star is a social optimum, so we get
 
	\begin{equation*}
		\begin{aligned}
			\text{PoA} & = \frac{\max_T \text{Cost}(T)}{\text{Cost}(G^*)} \\
			& \leq \frac{(n-1)\alpha + n\cdot rad(G) + rad(G)(rad(G)-1) + (n - 2rad(G) + 1)rad(G)}{(n-1)\alpha + 2n - 1} \\
			& = \frac{(n-1)\alpha + 2n\cdot rad(G) - (rad(G))^2}{(n-1)\alpha + 2n - 1} \\
			& = 1 + \frac{2(n-1)(rad(G) - 1) - (rad(G)-1)^2}{(n-1)\alpha + 2n - 1}\\
			& < 1 + \frac{2(rad(G)-1)}{\alpha +2} \\
			& < 1 + \frac{diam(G) - 1}{\alpha + 2} \\
			& < 3. \\
		\end{aligned}
	\end{equation*}
\end{proof}

\begin{corollary}
	\label{main_corollary}
	For $\alpha > 19$, the price of anarchy is less than $3$.
\end{corollary}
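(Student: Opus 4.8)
The plan is to read the corollary off directly from the two theorems already established, so the argument is a one-line composition. First I would apply Theorem~\ref{main}: because $\alpha > 19$, every Nash equilibrium graph is a tree, hence the set $\mathcal{N}$ of Nash equilibria consists only of tree graphs. Then the hypothesis of Theorem~\ref{poa} is met verbatim, and Theorem~\ref{poa} yields $\mathrm{PoA} < 3$. No new estimate is needed.

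Since the statement itself is immediate, I would instead make sure the two ingredients it rests on are sound. For Theorem~\ref{main} the argument is a degree sandwich on any biconnected component $H$ of a hypothetical non-tree equilibrium: Lemma~\ref{lower} gives $deg(H) \ge 2 + \tfrac{1}{5}$, while Lemma~\ref{upper} gives $deg(H) < 2 + 2/\lceil(\alpha-1)/2\rceil$; since $\alpha > 19$ forces $\lceil(\alpha-1)/2\rceil \ge 10$, the latter bound is $\le 2 + \tfrac{1}{5}$ as a strict inequality on $deg(H)$, so the two cannot coexist and $G$ has no biconnected component, i.e.\ $G$ is a tree. The one point worth double-checking is precisely that Lemma~\ref{upper} is a \emph{strict} inequality, which is what lets the contradiction bite at the threshold $\lceil(\alpha-1)/2\rceil = 10$. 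For Theorem~\ref{poa} one first bounds $diam(G) \le 2\alpha + 3$ via the deviation in which a farthest leaf of a central shortest-path tree buys an edge to the centre, then bounds $\mathrm{Cost}(T)$ by the worst tree with exactly two vertices on each non-extremal level, divides by the cost of the star optimum (Lemma~\ref{optimum}), and simplifies the telescoping expression to $1 + (diam(G)-1)/(\alpha+2) < 3$.

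The genuine difficulty in the whole development — and the place I would spend effort if I were reconstructing it — is not the corollary but Lemma~\ref{path}, the claim that a maximal $2$-degree path inside $H$ has at most three internal vertices. This is where the degree approach and the min-cycle approach are fused: one needs Lemma~\ref{length2} (the length of any min cycle is at least $2\alpha-1$) together with Lemma~\ref{outsider} (each $v \in V(H)$ is within $D_G(v) + 2 - \alpha$ of everything in its cell $S(v)$) to control all three parts, $S(x_j)\cup S(x_{j+1})\cup S(x_{j+2})$, $S_{\mathrm{right}}$, and $S_{\mathrm{left}}$, after the swap $(x_j,x_{j+1}) \mapsto (x_j,x_{j+3})$. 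Everything downstream, including this corollary, is bookkeeping once that swap argument goes through.
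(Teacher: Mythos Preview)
Your proposal is correct and matches the paper's approach: the paper states Corollary~\ref{main_corollary} without proof, since it follows immediately by composing Theorem~\ref{main} (for $\alpha>19$ every equilibrium is a tree) with Theorem~\ref{poa} (tree equilibria give $\mathrm{PoA}<3$). Your surrounding commentary on the supporting theorems and the role of the strict inequality in Lemma~\ref{upper} is also accurate.
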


\section{Conclusion}

In this paper, we study the max-distance network creation game. We use the technique based on the average degree of the biconnected component and also combine the analysis of min cycles. We show that for $\alpha > 19$ there will be a contradiction if there exists a biconnected component in the equilibrium graph. Therefore, every equilibrium graph is a tree for $\alpha > 19$ in the max-distance network creation game and the price of anarchy in this range is a constant. Moreover, we improve the upper bound on the price of anarchy for tree equilibria to $3$.

It would be interesting to determine the optimal bound in future work. By saying the optimal bound, we refer to both the minimal unit edge cost $\alpha$ for every equilibrium graph to be a tree, and the tightest constant upper bound on the price of anarchy. It can be seen from our proof that most lemmas only require $\alpha > 2$ or $\alpha > 5$, which implies there is still a lot of room for improvement of these bounds. 
Another direction of future work is to study how to combine the degree approach and the min-cycle approach to obtain better bounds in the sum-distance network creation game. 



%
%
%
\bibliographystyle{splncs04nat}
\bibliography{ref}
%




\end{document}